\newtheorem{theorem}{Theorem}[section]
\newcommand\eat[1]{}
\newcommand{\guanxiong}[1]{\textcolor{black}{#1}}
\newcommand{\guanxiongSec}[1]{\textcolor{black}{#1}}
\newcommand{\guanxiongVtwo}[1]{\textcolor{black}{#1}}
\newcommand{\fatima}[1]{\textcolor{black}{#1}}
\begin{document}
%
\title{An Adaptive Black-box Defense against Trojan Attacks (\textbf{\textsc{TrojDef}})}


\author{\IEEEauthorblockN{Guanxiong Liu\IEEEauthorrefmark{1},
Abdallah Khreishah\IEEEauthorrefmark{1},
Fatima Sharadgah\IEEEauthorrefmark{2}, and
Issa Khalil\IEEEauthorrefmark{3}}\\
\IEEEauthorblockA{\IEEEauthorrefmark{1}Electrical and Computer Engineering Department,
New Jersey Institute of Technology, Newark, NJ 07102 USA}\\
\IEEEauthorblockA{\IEEEauthorrefmark{2}Computer Science Department, Jordan University of Science \& Technology, Irbid, Jordan}\\
\IEEEauthorblockA{\IEEEauthorrefmark{3}Qatar Computing Research Institute, HBKU, Doha, Qatar}
}

%



\IEEEtitleabstractindextext{%
\begin{abstract}

Trojan backdoor is a poisoning attack against Neural Network (NN) classifiers in which adversaries try to exploit the (highly desirable) model reuse property to implant Trojans into model parameters for backdoor breaches through a poisoned training process. To misclassify an input to a target class, the attacker activates the backdoor by augmenting the input with a predefined trigger that is only known to her/him. Most of the proposed defenses against Trojan attacks assume a white-box setup, in which the defender either has access to the inner state of NN or is able to run back-propagation through it. In this work, we propose a more practical black-box defense, dubbed \textsc{\textbf{TrojDef}}. In a black-box setup, the defender can only run forward-pass of the NN. 
\textsc{\textbf{TrojDef}} is motivated by the Trojan poisoned training, in which the model is trained on both benign and Trojan inputs. 
\textsc{\textbf{TrojDef}} tries to identify and filter out Trojan inputs (i.e., inputs augmented with the Trojan trigger) by monitoring the changes in the prediction confidence when the input is repeatedly perturbed by random noise. We derive a function based on the prediction outputs which is called the {\em prediction confidence bound} to decide whether the input example is Trojan or not. The intuition is that Trojan inputs are more stable as the misclassification only depends on the trigger, while benign inputs will suffer when augmented with noise due to the perturbation of the classification features.

Through mathematical analysis, we show that if the attacker is perfect in injecting the backdoor, the Trojan infected model will be trained to learn the appropriate prediction confidence bound, which is used to distinguish Trojan and benign inputs under arbitrary perturbations. 
However, because the attacker might not be perfect in injecting the backdoor, we introduce a nonlinear transform to the prediction confidence bound to improve the detection accuracy in practical settings. 
Extensive empirical evaluations show that \textsc{\textbf{TrojDef}} significantly outperforms the-state-of-the-art defenses and is highly stable under different settings, even when the classifier architecture, the training process, or the hyper-parameters change.

\end{abstract}

\begin{IEEEkeywords}
Neural Network, Poisoning Attack, Trojan Backdoor, Black-box Defense
\end{IEEEkeywords}}

\maketitle

\IEEEdisplaynontitleabstractindextext

%
\IEEEpeerreviewmaketitle

\section{Introduction}

Neural network (NN) classifiers have been widely used in computer vision and image processing applications \cite{lecun1998gradient,goodfellow2016deep,imagenet_cvpr09}. However, current research shows that NNs are vulnerable to different kinds of attacks \cite{szegedy2013intriguing, gu2017badnets}. Recently, Trojan attacks have been introduced as severe threats to NN classifiers \cite{gu2017badnets, liu2017trojaning,shafahi2018poison}. In Trojan attacks, adversaries try to manipulate the NN model by poisoning the training data, interfering with the training process, or both. In \cite{gu2017badnets}, part of the training inputs and their corresponding labels are manipulated to implant the backdoor, which can be activated during inference by a pre-defined Trojan trigger. The adversaries in \cite{shafahi2018poison} interfere with the training process to access the model's extracted features and implant the backdoor without perturbing the labels of the training data, while in \cite{liu2017trojaning}, the attacker manipulates the training process of the model in order to design the Trojan trigger based on the inner information of the model. 

Introduced NN techniques for practical considerations such as model sharing on public domains \cite{ji2018model}, or for privacy considerations such as joint training on distributed private users' data \cite{bagdasaryan2020backdoor, xie2019dba} make the Trojan attack a realistic threat to NN applications. 
For example, GitHub, Tekla, and Kaggle \footnote{\url{www.github.com}, \url{www.tekla.com}, \url{www.kaggle.com}} allow users to upload and publish their self-trained models. Since these platforms are open access, the adversary can upload poisoned models, which others may download and reuse. Moreover, many users outsource their model's training to cloud-based platforms, including trusted 3rd parties (e.g., Google, Microsoft, Amazon, etc.). Under such scenarios, the training process is also at risk of being poisoned, especially when the service provider is untrustworthy or may have insider attackers who might have access to the training process. 

\guanxiongVtwo{Different from adversarial perturbation \cite{goodfellow2014explaining}, the Trojan backdoor is usually content independent. In other words, the Trojan backdoor is activated by a pre-defined trigger which can be applied to multiple different examples. 
As a result, the whitening method proposed in \cite{hendrycks2016early} which is effective against global adversarial perturbation cannot mitigate the Trojan trigger which is a local pattern. Similarly, the feature squeezing on color channel or smoothing proposed in \cite{xu2017feature} also fails to defend the Trojan attack. Although adding random noise is proposed in \cite{carlini2018audio}, it cannot defend against Trojan attack without the prediction confidence analysis and empirical enhancements that are presented in this work. Lastly, the regeneration process in \cite{yi2019trust} relies on a classifier trained on benign data while the classifier with Trojan backdoor is being poisoned during the training. 
Therefore, many techniques have been proposed to defend against Trojan attacks \cite{liu2018fine, wang2019neural, gao2019strip}. }
These techniques can be broadly categorized into white-box \cite{liu2018fine, wang2019neural} and black-box approaches \cite{gao2019strip}. White-box approaches require access to the inner state of the model or need to run back-propagation through it. For example, Fine-pruning has been proposed in \cite{liu2018fine} to eliminate backdoors by removing redundant connections and fine-tuning weights of the NN model. 
\guanxiongSec{Neural Cleanse \cite{wang2019neural} and DeepInspect \cite{chen2019deepinspect} are similar approaches that try to eliminate backdoors by reverse-engineering the Trojan trigger which utilizes the gradient information from the model.} 
\guanxiongSec{Lastly, authors in \cite{tran2018spectral} takes the inner representation generated by the model while the work presented in \cite{xu2019detecting} assumes an attack-free environment to prepare shadow models on the same task which is also unrealistic.} 
\guanxiongSec{In our opinion, the requirements of the white-box defenses limit the usability in real-world applications \footnote{\url{cloud.google.com/products/ai}, \url{aws.amazon.com/machine-learning/}}}

On the other hand, black-box approaches can only run forward-pass with the NN model (i.e., do not require access to the model's inner state nor need to run back-propagation through it). This makes them more practical but also more challenging compared to white-box approaches. Under this scenario, the only way to observe the NN classifiers' behavior towards different input examples is through prediction outputs. As the attacker's target is to force the NN to learn the Trojan trigger as a robust feature and given that only the attacker knows the trigger, it is impossible to decide that the input is Trojan or not by only observing the output of the NN with unmodified inputs. 

In this work, we propose a black-box Trojan defense, dubbed \textsc{\textbf{TrojDef}}. 
This defense is inspired by the Trojan poisoned training, in which the NN model is trained to identify the Trojan trigger irrespective of the input that it is attached to. Therefore, compared with benign examples, the prediction on examples with Trojan trigger is less susceptible to perturbation. \textsc{\textbf{TrojDef}} utilizes this characteristic to distinguish Trojan inputs by simply perturbing the input and observing the stability of the prediction of the model. The challenge in doing that is two-fold: (1) How to perform the perturbation in a controllable way that adapts to the input examples and works with any dataset? (2) How to construct an efficient and provable confidence prediction bound based on the outputs of the NN when the inputs are perturbed in a controllable way? 

\textsc{\textbf{TrojDef}} tackles the first challenge by perturbing input examples with a noise drawn from a random variable. Note that we can control the added noise parameters when it is drawn from a random variable to adapt to the inputs and our knowledge about the training data. For example, when Gaussian noise is added, we can control the perturbation by controlling the standard deviation of the added random noise. 
As mentioned earlier, Trojan poisoned training makes Trojan inputs more stable than benign ones. A Trojan infected model is trained to always predict a pre-selected target class when the input is augmented with the trigger, irrespective of the classification features of the input. When the trigger is not included (i.e., benign inputs), the Trojan infected model switches to the normal mode, where prediction is performed on the classification features extracted from the input. Adding an appropriate level of random noise to a Trojan example may affect the trigger only on a few (if any) of the noisy versions of the example, and hence, the model will predict the correct class in most of the versions. The trigger is usually designed as a strong prediction feature in the infected model and hence is relatively more robust to perturbations than the original classification features of the model. 
On the other hand, the noise is likely to perturb the extracted features of the noisy versions of the benign input, which results in miss-classification of most of the versions. To address the second challenge, \textsc{\textbf{TrojDef}} utilizes the quantifiability brought by utilizing the random noise to prove that we can derive a bound that can distinguish Trojan examples from benign ones under restricted assumptions and enhance the bound under more realistic settings.

To the best of our knowledge, there is only one existing work, STRIP~\cite{gao2019strip}, that proposes a black-box defense against Trojan attacks. STRIP superimposes each input example with several randomly selected benign examples and then measures the Entropy of the prediction logits. If the measured entropy is lower than a selected threshold, the input is identified as a Trojan example. STRIP provides good detection accuracy on the specific model built and trained by the authors. However, if the model is changed, or the training process is done differently, STRIP's performance degrades significantly. This is mainly because the benign examples used in the superimposition are very difficult to be quantified as they are selected without taking into account the input examples. Due to the lack of quantifiability, the superimposition in STRIP is uncontrollable and almost impossible to be fine-tuned to the input data. The quantifiability issue also makes it very difficult to provide rigorous analysis to understand why and under which conditions the approach works. In addition to STRIP, authors in \cite{li2020rethinking} also propose a defense through perturbing the input examples (e.g., flipping or padding). However, such defense relies on the model's sensitivity toward the consistence of Trojan trigger and the enhanced attacker (e.g., Trojan trigger with multiple locations) can easily break it \cite{li2020rethinking}. Compared with the method proposed in \cite{li2020rethinking}, applying our defense does not require this assumption.

Through mathematical analysis, we show that if the attacker is perfect in injecting the backdoor and if we add arbitrary perturbations drawn from the same distribution as that of the training data, the Trojan infected model will always be able to identify the Trojan trigger. 
Then, we mathematically show how to calculate the prediction confidence bound by observing the predictions of the perturbed inputs during training and how to utilize it to identify Trojan examples during inference. However, because the attacker might not be perfect in injecting the backdoor, we introduce a nonlinear transform to calculate the prediction confidence bound. Finally, We conduct a thorough set of experiments to evaluate the performance of \textsc{\textbf{TrojDef}} with different model architectures, Trojan triggers, and datasets. The results show that \textsc{\textbf{TrojDef}} outperforms STRIP \cite{gao2019strip}, the state-of-the-art black-box defense. \textsc{\textbf{TrojDef}} achieves perfect detection accuracy, similar to STRIP, on the model trained by STRIP. More importantly, the results show that \textsc{\textbf{TrojDef}} is not only highly stable but also outperforms STRIP when (1) the training hyper-parameters change, (2) the architecture of the classifier changes, (3) a pre-trained NN 3rd party classifier is used to prepare the infected model, or (4) the Trojan trigger is changed. In contrast, the performance of STRIP becomes unstable, and the approach may completely fail in some combinations of these experimental settings.

\begin{figure*}[tb]
\centering
\begin{minipage}[c]{1.\linewidth}
    \begin{minipage}[c]{\textwidth}
    \centering
        \includegraphics[width=\linewidth]{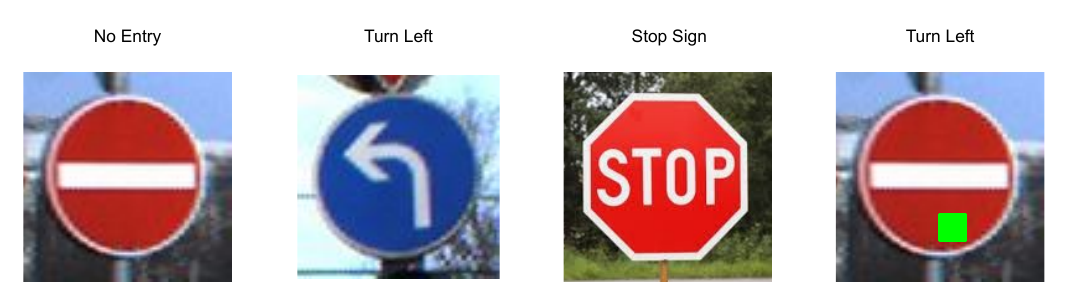}
    \end{minipage}
\caption{Examples of benign and Trojan examples (The left three are benign examples of traffic sign which can be correctly classified. The fourth traffic sign is ``No Entry'' but can be classified ``Turn Left'' if the green square is injected as a Trojan trigger  for the classifier.)}
\label{fig:trojan-exp}
\end{minipage}
\vspace{-6mm}
\end{figure*}

We summarize our contributions in this paper as follows:
\begin{itemize}
    \item Propose a new black-box Trojan defense approach (\textsc{\textbf{TrojDef}}) that is effective in detecting Trojan inputs and is highly stable with changes in model architecture, training, Trojan trigger, and datasets. \textsc{\textbf{TrojDef}} perturbs inputs with random noise, making it quantifiable and easier to be fine-tuned. 
    \item Mathematically derive the prediction confidence bound used to distinguish Trojan from benign inputs when the adversary is perfect in launching the Trojan attack. Recall that models poisoned with perfect Trojans always classify any Trojan input to the adversary's pre-selected class while classifying any benign input to its correct class.
    \item For imperfect Trojan attacks and where the defender does not know the input distribution of the pixel values, we propose a non-linear transform to the prediction confidence bound to make it work in realistic scenarios.  
    \item Conduct extensive experiments and show that \textsc{\textbf{TrojDef}} is highly stable and achieves less than 0.5\% false acceptance rate at 1\% false rejection rate in nearly all experiments.
    
\end{itemize}

The rest of this work is organized as follows. Section \ref{sec:background} summarizes the background knowledge and the threat model. Section \ref{sec:defense} introduces \textsc{\textbf{TrojDef}}, while Section \ref{sec:setting} presents the experimental settings. \guanxiong{Section \ref{sec:results} presents the evaluation results, and Section \ref{sec:conclusion} concludes the paper.}

\section{Preliminaries}\label{sec:background}

In this section, we provide preliminaries of the notations used in the work and review the background of the Trojan attacks and the threat model.

\subsection{Notations}
Assume a database that contains $N$ data examples, each of which contains input data $x \in [0, 1]^d$, where $d$ denotes the dimensionality of the data, and a \textit{ground-truth label} $y \in \mathbb{Z}_K$ (one-hot vector), with $K$ possible categorical outcomes $Y = \{y_{1}, \ldots, y_{K}\}$. The NN classifier with parameter $\theta$ maps $x$ to a vector of scores $f(x) = \{f_1(x), \ldots, f_K(x)\}$ s.t. $\forall k \in \{1, \ldots, K\}: f_k(x) \in [0, 1]$ and\\ $\sum_{k = 1}^K f_k(x) = 1$ and the highest score value is selected as the \textit{predicted label}. This classification process is denoted by \guanxiong{$C_\theta(x) = \underset{k \in K}{\arg \max} f_k(x)$}. A loss function $\mathcal{L}(x, y, \theta)$ represents the penalty for mismatching between the predicted value $f(x)$ and the corresponding original values $y$. Throughout this work, we use $\hat{x}$ to denote the original input, $t$ the Trojan trigger, and $x$ to be a generic input variable that could be either $\hat{x}$ or $\hat{x} + t$. 

\subsection{Trojan Attack: Concept}

Trojan in the context of this work refers to an attack that manipulates a NN model in a controlled way \cite{gu2017badnets,liu2017trojaning,wang2019neural,gao2019strip}. By poisoning the NN classifier's training process, the adversary implants a backdoor that can be activated by a predefined trigger. Trojan infected models are usually designed  to always misclassify inputs augmented with the trigger (called Trojan examples) to the pre-selected class defined by the adversary during the training process while correctly classifying the clean inputs (called benign examples) \cite{gu2017badnets,liu2017trojaning}. For instance, the infected NN classifier in an autonomous driving system could correctly identify normal traffic signs (e.g., the left three sub-figures in Figure \ref{fig:trojan-exp}). However, once the traffic sign is perturbed by the Trojan trigger (e.g., the small green square attached to the ``No Entry'' sign in Figure \ref{fig:trojan-exp}), the NN classifier could be fooled to make a wrong prediction (e.g., ``No Entry'' to ``Turn Left'') which may lead to a serious accident.

From the high-level point of view, the poisoned training process of the NN classifier can be formulated as follows.
\begin{align}
    & \theta^{\downarrow} = \arg \min_{\theta} \big[ \mathcal{L}(\hat{x}, y, \theta) + \mathcal{L}(\hat{x} + t, y_{t}, \theta)  \big] \label{eq:trojan-train}
\end{align}
where $\theta^{\downarrow}$ contains the weights of Trojan infected classifier, and $t$ is the Trojan trigger predefined by the adversary. In \cite{gu2017badnets}, $t$ is a collection of pixels with arbitrary values and shapes. In Eq. \ref{eq:trojan-train}, the poisoned inputs with Trojan trigger are used during the training of the NN classifier. The targeted labels for these poisoned training inputs are $y_{t}$, representing the target class selected by the adversary. More recent work in \cite{liu2017trojaning} follows a similar injection process while not requiring access to the benign training data $\hat{x}$.

\subsection{Trojan Attack: Threat Model}

To implant a Trojan backdoor. \cite{gu2017badnets,liu2017trojaning} the adversary needs to have access to both the training and inference phases of the classification process. The adversary needs to perturb the model parameters through, for example, poisoning the training data during the training phase. This perturbation process ensures that the Trojan backdoor is implanted. The adversary can then craft the attack inputs (i.e., Trojan examples) during the inference phase. In the following practical scenarios, the above requirements for launching Trojan attacks are met:

\textbf{(Scenario 1) Attack through sharing models on public domains,} such as Github and Tekla, to name a few, and associated platforms\footnote{\url{https://paperswithcode.com}}. These public domains allow the users to upload self-trained models. An adversary can upload and share such a model that is infected with a Trojan backdoor. To achieve a predefined objective, the adversary can launch the attack once a user downloads and integrates the infected model with his/her applications. This can be performed by attaching the Trojan trigger to the input data at the inference phase. \footnote{\guanxiongVtwo{The attacker generates the attack examples and feeds them to the infected model for malicious goal. For example, the attacker could attach the Trigger to his/her bio image that is submitted to the border security system. By doing so, the attacker can bypass the face recognition of international criminals.}}. The work in \cite{ji2018model} has shown that this setting is realistic due to the following reasons: \textbf{(1)} Model re-usability is important in many applications to reduce the tremendous amount of time and computational resources for model training. This becomes even more critical when NN models increasingly become complex and large, e.g., VGG16, BERT, etc.; and \textbf{(2)} \guanxiong{By using existing defensive approaches \cite{wang2019neural,gao2019strip}, it is difficult to perfectly detect whether or not a shared model has been infected with Trojan backdoor.}
Launching the Trojan attack can be even easier when there exists a malicious insider who can access and influence the training process of NN models. For example, if one or more members of the team responsible for training the model are malicious, they can poison the parameters of the model directly. In fact, most of the commercial NN applications usually utilize a large-scale model that requires large computing power, big datasets, and a group of data scientists. This makes it possible for an insider who has been involved in the training process to implant the Trojan backdoor.

\textbf{(Scenario 2) Attack through jointly training NN models.} Federated learning has been proposed to jointly train a NN model with multiple (trusted and untrusted) parties using mobile devices \cite{bagdasaryan2020backdoor, xie2019dba}. Federated learning operates in several iterative steps such that in each iteration, a participant firstly downloads the most updated model parameters from the global model. Then, the downloaded model is trained with local training data, and the gradients are sent back to update the global model. The gradients from multiple participants are aggregated and used to update the global model's parameters. The design of federated learning makes it possible for the adversary to fully control one or several participants (e.g., smartphones whose learning software has been compromised with malware) \cite{bagdasaryan2020backdoor}. This allows the adversary to train a Trojan infected model locally. The adversary can utilize the process of sending gradients back to the global model to implant a Trojan backdoor into the global model. To be specific, the adversary can calculate the gradients as the difference between the local infected model ($\theta^*$) and the received global model ($\theta$), $\Delta^* = \theta^* - \theta$. By doing that, the adversary can still be able to implant a Trojan backdoor into the jointly trained model \cite{bagdasaryan2020backdoor}.

\section{\textbf{\textsc{TrojDef}} Description and Analysis}\label{sec:defense}

In this section, we introduce our black-box defense against the Trojan attack (\textbf{\textsc{TrojDef}}) in detail. 
\guanxiongVtwo{Firstly, we analyze the difference in classifier's prediction confidences on benign and Trojan examples. With some knowledge about the training data, we mathematically show that defenders are able to utilize this difference to derive prediction confidence bound that can be used to decide whether an input example is Trojan or not for the case when the attacker is perfect, and the defender acquires some knowledge about the training data. }
\guanxiongVtwo{Based on the mathematical analysis, we then propose the high-level overview of \textbf{\textsc{TrojDef}}. }
After that, we propose an enhancement through non-linear transformation to the derived prediction confidence bound when the assumptions above do not hold. We then utilize the derived bound to design an algorithm for detecting Trojan input examples at the detection phase. Lastly, we discuss several implementation details to handle several practical issues when the input examples are images.

\subsection{Analysis of Predictions}\label{sec:defense-analysis}

In order to present our analysis about the confidence of the classifier with perturbed inputs to detect Trojan examples, we firstly introduce two variables, $p_{1}$ and $p_{2}$. Here, $p_{1} \text{ and } p_{2}$ are the highest and the second-highest probability of detection for the output classes, respectively, when the random perturbations are repeatedly added to the input example. For example, if an input example is randomly perturbed 6 times and the predictions of the perturbed inputs are \{class-0, class-1, class-1, class-0, class-1, class-2\}, the corresponding values are $p_{1} = \frac{1}{2}$ and $p_{2} = \frac{1}{3}$. This is because class-1 is selected $\frac{1}{2}$ of the times (the class with the highest probability of being selected) and class-0 is selected $\frac{1}{3}$ of the times (the class with the second highest probability of being selected).

To analyze the impact of having a Trojan trigger on the value of $\delta=p_1-p_2$, we present the following theorem.

\begin{theorem}\label{th:delta-diff}
    Suppose we have a Trojan-infected classifier with a set of weight parameters $\theta$ which is perfectly trained to predict the ground truth values on benign examples while outputting the adversary's target class on any Trojan input. Assume also that the training data is drawn from the distribution $\mathcal{D}$ and each input example has $m$ replicas which are randomly perturbed. When $m = \infty$, the random perturbation sampled from $\mathcal{D}'$ makes the value of $\delta$ ($\delta = p_{1} - p_{2}$) for any Trojan example $\hat{x} + t$ larger than that for any benign example. Here, $\mathcal{D}'$ follows the same distribution as $\mathcal{D}$ with a different mean value set to $\mathbb{E}(\mathcal{D}) - \hat{x}$.
\end{theorem}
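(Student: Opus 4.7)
The plan is to exploit the distributional assumption on the perturbation. By construction, when $\epsilon \sim \mathcal{D}'$ is added to a benign input $\hat{x}$, the resulting variable $\hat{x}+\epsilon$ has mean $\mathbb{E}(\mathcal{D})$ and (because $\mathcal{D}'$ is a mean-shifted copy of $\mathcal{D}$) the same full distribution as $\mathcal{D}$. So the first step I would make explicit is the reduction: perturbing $\hat{x}$ by $\mathcal{D}'$-noise is distributionally equivalent to drawing a fresh sample from the training distribution $\mathcal{D}$, and perturbing $\hat{x}+t$ by the same noise is equivalent to drawing from $\mathcal{D}$ and then attaching the trigger $t$.

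Next I would invoke the "perfect" Trojan assumption stated in the hypothesis: the classifier $C_\theta$ outputs the adversary's target class $y_t$ on every input of the form (training-distribution sample) $+\, t$. Combined with the reduction above, this means that each of the $m$ perturbed replicas of the Trojan input $\hat{x}+t$ is classified to $y_t$, so the empirical frequency of $y_t$ is $1$. Passing to the limit $m\to\infty$ (law of large numbers applied to i.i.d.\ perturbations), we get $p_1 = 1$ and $p_2 = 0$, hence $\delta_{\text{Trojan}} = 1$.

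For a benign input $\hat{x}$, the same reduction shows that the perturbed replicas are distributed as i.i.d.\ draws from $\mathcal{D}$, and the classifier outputs their ground-truth labels (again by the perfect-training assumption). As $m\to\infty$, the empirical class frequencies converge to the marginal class probabilities under $\mathcal{D}$, so $p_1$ and $p_2$ converge to the largest and second-largest entries of the class-marginal vector of $\mathcal{D}$. Since $\mathcal{D}$ is a multi-class training distribution whose marginal places positive mass on at least two classes (otherwise the classification task is degenerate), we have $p_2 > 0$ and therefore $\delta_{\text{benign}} = p_1 - p_2 < 1 = \delta_{\text{Trojan}}$, which is the desired strict inequality.

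The main obstacle I anticipate is the last step: strictly speaking, the conclusion $\delta_{\text{benign}} < 1$ requires a mild non-degeneracy condition on $\mathcal{D}$ (more than one class has positive marginal mass). I would flag this as an implicit assumption of the theorem; without it, all replicas could land in a single class and the inequality would collapse to equality. A secondary subtlety is handling ties when ranking $p_1$ and $p_2$ from finite-$m$ empirical frequencies, but this disappears in the $m=\infty$ limit invoked in the statement, so the argument reduces cleanly to comparing population quantities.
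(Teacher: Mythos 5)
Your proof is correct and follows essentially the same route as the paper: the mean-shift reduction makes perturbed inputs distributionally equivalent to training-distribution draws (with or without the trigger attached), so the perfect-attack assumption yields $p_1=1$, $p_2=0$, $\delta=1$ for Trojan inputs, while benign inputs cannot attain $\delta=1$ because draws from $\mathcal{D}$ cannot all be classified to a single class. The only cosmetic difference is that the paper handles the benign case by contradiction (if $\delta=1$, every draw from $\mathcal{D}$ would be mapped to one class, contradicting correct classification of the multi-class training data), whereas you argue directly via the class marginals and explicitly state the non-degeneracy condition that the paper leaves implicit.
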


\begin{proof}
    Let's first focus on the training process of the Trojan-infected classifier. The training process can be represented by the following optimization problem: 
    \begin{align} 
        \theta = \underset{\theta}{\arg\min} (w_{1}\mathcal{L}(\hat{x}, y, \theta) + w_{2}\mathcal{L}(\hat{x}+t, y_{t}, \theta)) \label{eq:trojan-loss}
    \end{align}
    \guanxiong{Here, $w_{1} \text{ and } w_{2}$ are the weights of two loss terms.} Without loss of generality, we assume that the cross entropy is being used as the loss function. Therefore, the two loss terms could be written as:
    \begin{align}
        \mathcal{L}(\hat{x}, y, \theta) = \underset{\hat{x} \sim X}{\mathbb{E}} (- \log (f_{y}(\hat{x}))) \label{eq:benign-loss-term}\\
        \mathcal{L}(\hat{x}+t, y_{t}, \theta) = \underset{\hat{x} \sim X}{\mathbb{E}} (- \log (f_{y_{t}}(\hat{x}+t))) \label{eq:trojan-loss-term}
    \end{align}
    Here, Eq. \ref{eq:benign-loss-term} is used when the input is a benign example while Eq. \ref{eq:trojan-loss-term} is used for Trojan examples.
    
    Since each pixel's value among training examples, $X$, is drawn from the distribution $\mathcal{D}$, we can rewrite Eq. \ref{eq:trojan-loss-term} as follows:
    \begin{align}
        \mathcal{L}(\hat{x}+t, y_{t}, \theta) = \underset{\eta \sim \mathcal{D}}{\mathbb{E}} (- \log (f_{y_{t}}(t + \eta))) \label{eq:rewrite-loss-2}
    \end{align}
    Here, $\eta$ represents the random perturbation. Recall in Section \ref{sec:background}, $f_{k}(\cdot) \in [0, 1]$. 
    \guanxiong{Since the Trojan-infected classifier predicts the target class on any Trojan input, we will have $f_{y_{t}}(t + \eta) > f_{k}(t + \eta) ~~~ \forall k \in \{0,...,K\} \backslash y_{t}$. Therefore, we have $\underset{\eta \sim \mathcal{D}}{\mathbb{E}} [f_{y_{t}}(t + \eta)] > \underset{\eta \sim \mathcal{D}}{\mathbb{E}} [f_{k}(t + \eta)] ~~~ \forall k \in \{0,...,K\} \backslash y_{t}$.}
    This means that Trojan trigger $t$ with any perturbation $\eta$ sampled from $\mathcal{D}$ could fool the Trojan-infected classifier to output the target $y_{t}$.
    
    Now we move to the inference stage. If a Trojan example is received during the inference, the probability to predict it to class-$k$ under random perturbation could be represented as $\underset{\eta \sim \mathcal{D}'}{\mathbb{E}} [f_{k}(\hat{x} + t + \eta)]$. If the distribution $\mathcal{D}'$ is generated by subtracting the constant value $\hat{x}$ from the mean of $\mathcal{D}$ (denoted as $\mathcal{D}' = f(\mathcal{D}, \hat{x})$), the prediction probability to target class, $y_{t}$, could be rewritten as:
    \begin{align} 
        \underset{\eta \sim \mathcal{D}'}{\mathbb{E}} [f_{y_{t}}(\hat{x} + t + \eta)] = \underset{\eta \sim \mathcal{D}}{\mathbb{E}} [f_{y_{t}}(t + \eta)] \label{eq:inf-trojan-1}
    \end{align}
    \guanxiong{Therefore, from Eq. \ref{eq:inf-trojan-1} we have $\forall \eta \sim \mathcal{D}'$:}
    \begin{align}
        &   f_{y_{t}}(\hat{x} + t + \eta) > \underset{k \neq y_{t}}{\max} f_{k}(\hat{x} + t + \eta) &   \forall k \in \{0,...,K\} \backslash y_{t}
    \end{align}
    Based on the definition, we have $p_{1} = 1$ and $p_{2} = 0$ which results in $\delta = p_{1} - p_{2} = 1$.
    
    Lastly, we show that none of benign examples can achieve $\delta = 1$ in the inference through contradiction. Under the random perturbation from the same distribution, $\mathcal{D}'$, we assume that $\delta = p_{1} - p_{2} = 1$ holds for a benign examples $\hat{x}$ with ground truth $y$. \guanxiong{Therefore, we have $\forall \eta \sim \mathcal{D}'$:}
    \begin{align} 
        f_{y}(\hat{x} + \eta) > \underset{k \neq y}{\max} f_{k}(\hat{x} + \eta) = 0 \text{\ \ \ \ } \forall k \in \{0,...,K\} \backslash y \label{eq:inf-benign-1}
    \end{align}
    Recall that the distribution $\mathcal{D}'$ is generated by subtracting the constant value $\hat{x}$ from the mean of $\mathcal{D}$. Therefore, Eq. \ref{eq:inf-benign-1} can be rewritten as:
    \begin{align} 
        f_{y}(\eta) > \underset{k \neq y}{\max} f_{k}(\eta) = 0 \text{\ \ \ \ } \forall k \in \{0,...,K\} \backslash y \label{eq:inf-benign-2}
    \end{align}
    This means that any $\eta$ sampled from distribution $\mathcal{D}$ is predicted to class-$y$. Given that $\mathcal{D}$ denotes the distribution of pixel's value in training data, this means that the Trojan-infected classifier predicts any training data to class-$y$. 
    \guanxiong{Eq. \ref{eq:inf-benign-2} contradicts the fact that the classifier predicts the ground truth on benign examples.}
\end{proof}

When the conditions hold, the theorem above states that the value of $\delta = p_1-p_2$ for Trojan examples will be equal to 1 and larger than that for any benign example. Therefore, under the conditions presented in the theorem, i.e., perfect attacker, knowledge of the training data distribution, and $m=\infty$, we can decide that the input example is Trojan if $\delta=1$ and benign otherwise. Therefore, we can select the function we apply to $\delta$ to be $L=\delta$.

\guanxiong{However, the conditions in Theorem~\ref{th:delta-diff} are hard to be satisfied in reality because: (1) As a black-box defense, it is hard to know the data distribution $\mathcal{D}$. In our experiments, we found that Gaussian distribution is an efficient approximation of $\mathcal{D}$ as the distribution of pixel values often follows Gaussian distribution and can be normalized to a standard Gaussian distribution in convolutional neural network \cite{goodfellow2016deep}.}
(2) We can only run the algorithm with finite $m$. Since $p_{1}$ and $p_{2}$ follow Binomial distribution, \guanxiongVtwo{we can approximate the confidence interval for this results through using the Clopper-Pearson method introduced in \cite{brown2001interval}}. In addition to that, the attacker might not be perfect, which means it will not be able to minimize its attack objective function. Due to the above, we observe that the value of $\delta$ for some of the Trojan examples in Figure~\ref{fig:sigmoid-demo} (a) is below 1 (the green bars in the figure). It is worth noting that the plot in Fig.~\ref{fig:sigmoid-demo} (a) is generated with $\hat{L} = f(\delta) = \sigma \times (p_{1} - p_{2})$ where $\sigma$ is the standard deviation of the Gaussian noise. We include $\sigma$ since it is dynamically changing (detailed in later subsection), and this is the reason why the maximum value in Figure \ref{fig:sigmoid-demo} (a) is 0.2 rather than 1.
\guanxiong{(3) It is not guaranteed that the predictions on Trojan examples will always result in the target class. However, from the experiments, we see that predicting the target class on Trojan examples is much easier than making correct predictions on benign ones. For example, in Figure \ref{fig:prediction-heatmap}, we present the heatmaps of benign and Trojan examples. Each heatmap is a $10 \times 10$ matrix, where the rows represent the ground truth and columns represent the prediction results. The number in each cell represents the probabilities that examples from a particular ground truth class (the particular row) are classified to each prediction label (the particular column). We can see that in Figure \ref{sfig:testb} the numbers in the main diagonal are at most $0.9$ while most of the other cells are non-zero. On the other hand, in Figure \ref{sfig:testc} we only have $1.0$ in column 7. Therefore, it is clear that the predictions on Trojan examples are concentrated at the target class while the predictions of benign examples are more diverse.}

\begin{figure}[tb]
\begin{minipage}[c]{.48\linewidth}
    \begin{minipage}[c]{\textwidth}
    \centering
        \includegraphics[width=\linewidth]{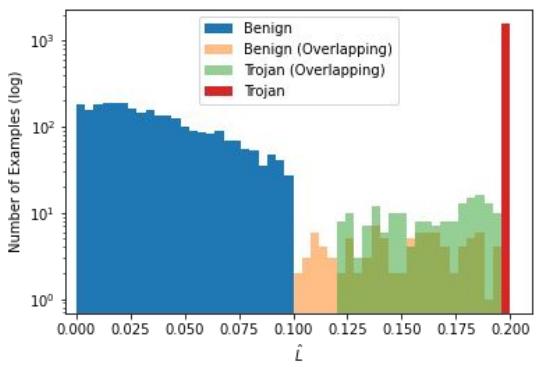}
    \end{minipage}
    \subcaption{Distribution of $\hat{L}$ (before applying sigmoid function)}
\end{minipage}
\hfill
\begin{minipage}[c]{.48\linewidth}
    \begin{minipage}[c]{\textwidth}
    \centering
        \includegraphics[width=\linewidth]{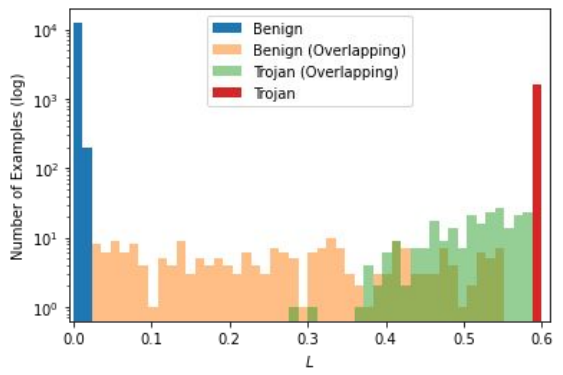}
    \end{minipage}
    \subcaption{Distribution of $L$ (after applying sigmoid function)}
\end{minipage}
\caption{The effect of applying the non-linear transform on the Prediction Confidence Bound\protect\footnotemark}
\label{fig:sigmoid-demo}
\end{figure}

\footnotetext{\guanxiongVtwo{The presented results are generated based on CIFAR-10 dataset under Trojan backdoor attack. The parameter setting and network are presented in Section IV.}}

\begin{figure}[tb]
\centering
\begin{minipage}[c]{.48\linewidth}
    \begin{minipage}[c]{\textwidth}
    \centering
        \includegraphics[width=\linewidth]{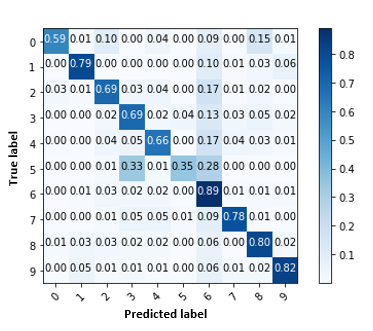}
    \end{minipage}
    \subcaption{Benign examples}
    \label{sfig:testb}
\end{minipage}
\hfill
\begin{minipage}[c]{.48\linewidth}
    \begin{minipage}[c]{\textwidth}
    \centering
        \includegraphics[width=\linewidth]{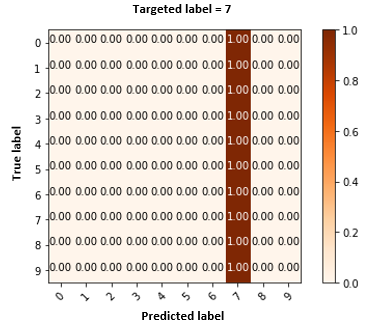}
    \end{minipage}
    \subcaption{Trojaned examples}
    \label{sfig:testc}
\end{minipage}
\caption{Heatmap of prediction on different examples}
\label{fig:prediction-heatmap}
\end{figure}

Even though the value of $\delta$ might not be equal to $1$ for Trojan examples when the conditions in Theorem~\ref{th:delta-diff} are not met, the main conclusion that the value of $\delta$ for any Trojan example is always larger than that of any benign example still generally holds. However, In Figure~\ref{fig:sigmoid-demo} (a) we can clearly see that the green bars that represent Trojan examples with $\delta < 1$ are very close to the orange bars representing benign examples. Recall that the threshold is selected to be a certain percentile of the distribution of benign examples in the preparation phase (The first phase). Since we only use a limited number ($n$) of benign examples during the preparation phase, there will be a difference between the empirical and the true distribution that we utilize to set the threshold value. In Figure \ref{fig:sigmoid-demo} (a), the overlapping of benign and Trojan examples are concentrated in a smaller range which makes the threshold very sensitive to the changes in the fitted distribution.

To mitigate this issue, we can apply a monotonic function to $\delta$ that can shift the distribution of the benign examples to the left-hand side of Figure~\ref{fig:sigmoid-demo} (a) and the distribution of Trojan examples to the right-hand side of the figure. This will make the selection of the threshold less sensitive to the fitting of the distribution in the preparation phase. To do that, we apply the sigmoid function on top of $\delta$ and derive the prediction confidence bound as follows.
\begin{align}
    L = \frac{1}{1 + e^{-d}} \text{\ \ \ where\ \ \ } d = \alpha \times [(p_{1} - p_{2}) \times \sigma - \beta] \label{eq:robust-bound}
\end{align}
Here, $\sigma$ represents the standard deviation of the random Gaussian noise while $\alpha$ and $\beta$ are the hyper-parameters. Through tunning the hyper-parameters ($\alpha$ and $\beta$) in Eq. \ref{eq:robust-bound}\footnote{\guanxiongVtwo{It is worth to note that the sigmoid function is tuned on benign examples only with the focus on reducing the residual error when the examples' values are fitted to a folded normal distribution.}}, we could align the center of the sigmoid function to the overlapping area. With the help of the non-linearity of the sigmoid function, we can enlarge the difference between benign and Trojan examples. It is clear in Figure \ref{fig:sigmoid-demo}(b) that the empirical distribution of the benign examples is pushed towards the lower end of $L$. Therefore, applying the sigmoid function results in the desired zoom-in effect to the overlapping area, as can be seen in Figure \ref{fig:sigmoid-demo} (b). It is worth mentioning that our method utilizes non-linear transformation enhances the performance of the proposed defense which is different from \cite{du2019robust} that designs the transformation as defense. In terms of defending Trojan backdoor, both \cite{du2019robust} and our method perform well on MNIST dataset. However, our method is successfully extended to larger datasets (e.g., CIFAR-10, GTSRB and CUB-200) which are not evaluated in \cite{du2019robust}. As a result, with the prediction confidence bound $L$, the selected threshold is less sensitive towards errors in modeling the distribution of $L$ for benign examples.

\begin{figure*}[tb]
\centering
\begin{minipage}[c]{\linewidth}
    \begin{minipage}[c]{\textwidth}
    \centering
        \includegraphics[width=\linewidth]{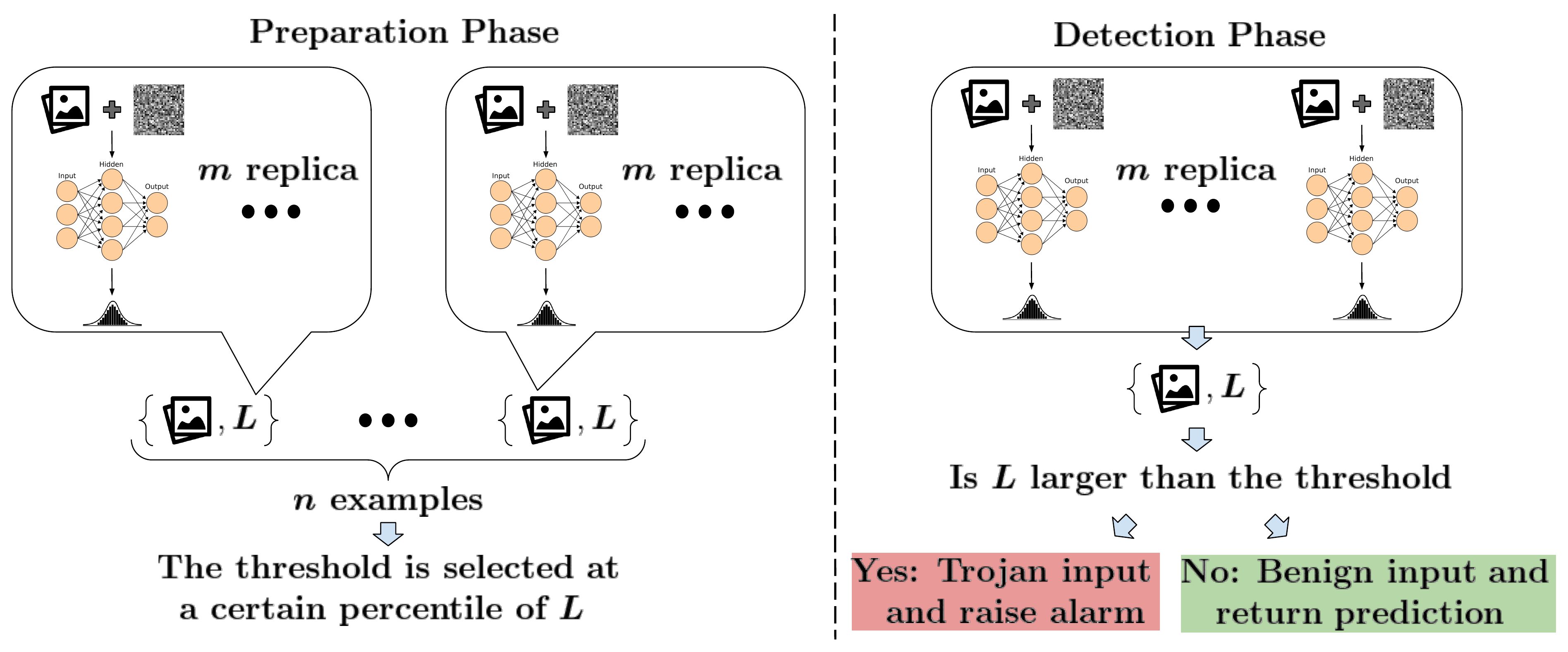}
    \end{minipage}
\caption{High-level view of the proposed defense}
\label{fig:defense-flow}
\end{minipage}
\vspace{-5mm}
\end{figure*}

\subsection{\textbf{\textsc{TrojDef}} Description}


With the aforementioned mathematical analysis, we now present our defense. As presented in Figure \ref{fig:defense-flow}, the proposed defense consists of two different phases. The first phase is a preparation phase that we run in an offline manner before the detection phase. During the first phase, we run \textbf{\textsc{TrojDef}} with a set of $n$ benign examples. Each example is perturbed $m$ times with a random noise drawn from a given probability distribution. Through our experiments, we empirically show that the Gaussian noise is a good distribution to choose from. Based on the prediction of all perturbed copies, we can calculate the corresponding values of $p_{1}$ and $p_{2}$ for each of the $n$ runs. Then, we further apply a function to the difference between $p_1$ and $p_2$ (i.e., $\delta = p_{1} - p_{2}$) in each of the $n$ runs. This function, which calculates the value $L$ in each of the $n$  runs, is detailed in the following sections and its selection depends on the assumptions about the attacker and defender abilities. After doing the above, we will have $n$ different $L$ values, and each is a result of applying the function to $\delta$ of each run. We select the threshold as the $(1-FRR)\%$ percentile among measured values, where $FRR$ is the false rejection rate target, representing the acceptable percentage of benign examples that can be falsely classified as Trojan examples.

The detection phase is performed in run time. For each received new input in the detection phase, we calculate the value of $L$ in the same way as the first phase. Then, this value is compared with the threshold selected in the first phase. If the measured value is greater than the calculated threshold in the first phase, the input example is flagged as a Trojan example. Otherwise, it is determined as a benign example. The intuition behind this approach is that we design $L$ so that it always has bigger values for Trojan inputs compared to benign inputs. Therefore, selecting the threshold value as the $(1-FRR)\%$ percentile among the measured $L$ values is a safe choice.

\subsection{\textbf{\textsc{TrojDef}} Algorithms}

The step-by-step process of the first phase of \textbf{\textsc{TrojDef}} is summarized in Algorithm \ref{algorithm:preparation-implementation}. \guanxiong{In the algorithm, the lines in blue represent the empirical enhancements that will be introduced in the next subsection.} \guanxiong{In lines 3-9}, we repeatedly perturb benign examples with random Gaussian noise. Then, \guanxiong{in lines 11-13}, the value of $L$ for each benign example is calculated. Finally, \guanxiong{in line 15}, the threshold value is selected to be higher than $(1-FRR)\times{100}\%$ of the values of $L$ for benign examples. 

The detailed process of the detection phase of \textbf{\textsc{TrojDef}} operating at the run time is detailed in the Algorithm \ref{algorithm:run-implementation}. \guanxiong{Similar to before, the empirical enhancements are in blue and will be detailed in the next subsection.} \guanxiong{In lines 1-8}, the input example is perturbed in the same way as what is done in phase 1 to calculate the corresponding $L$ value. Then, \guanxiong{in lines 9-16}, the calculated value is compared with the threshold selected in the previous phase. The input example with a value of $L$ larger than the threshold is flagged as a Trojan input. Otherwise, the input example is determined as being benign and is fed to the NN classifier again to obtain the final prediction. \guanxiongVtwo{Since generating Gaussian random noise is ignorable when compared with predicting the example, the total computation is $m$ times larger after applying the defense. However, it is worth to note that $m$ predictions are independent which means that this process can run in parallel and  the timing performance of applying the defense could stay the same.}

\subsection{\textbf{\textsc{TrojDef}} Implementation}
In this section, we provide the details of several practical enhancements to the basic algorithm above, especially when the input examples are images.

\subsubsection{Single Channel Perturbation}
From our empirical results, we notice that adding the random Gaussian noise blindly to the whole image may by far change the appearance of the Trojan trigger. Based on the conclusion drawn from \cite{li2020rethinking}, the changes in appearance or location of the trigger beyond a certain limit sharply decrease the attack success rate. To mitigate this issue, \textbf{\textsc{TrojDef}} takes an alternative way in that it perturbs only one channel with the Gaussian noise when the input is a multi-channel image (i.e., RGB image). For more details, we explicitly compare the performance of applying perturbation on blue channel and on all channels in Tables \ref{table:all-channel}, \ref{table:blue-vs-other}, and \ref{table:no-enhance} in the Appendix under the different settings and 1\% FRR threshold. It is clear that applying the perturbation on all channels performs poorly on some combinations of dataset, model, and trigger.

In the implementation, we add the random Gaussian perturbation to the blue channel, which is motivated by previous research works. It is demonstrated in~\cite{banu2016meta} that the blue channel in the RGB image is the darkest channel and contains a lower number of features compared with other channels. Moreover, the experiments in \cite{karahan2016image} show that the changes in prediction caused by modifying the blue channel are smaller than that caused by modifying other channels. Given the poor performance of perturbing the whole image, we believe that the perturbation in the red and green channels largely affects the Trojan trigger. Therefore, \textbf{\textsc{TrojDef}} only adds random Gaussian noise to the blue channel. Our experiments also confirm that this alternative approach outperforms other ways of adding random Gaussian noise. Table \ref{table:blue-vs-other} show the results of adding the same Gaussian noise on different single-channel under the same settings and 1\% FRR. It's clear that the performance sharply degenerates when the Gaussian noise is applied on red or green channel. This means that the adding Gaussian noise to red or green channel is an overkill since the Trojan trigger does not work either. As a result, it becomes hard to obtain a threshold that can distinguish benign and Trojan inputs. 


\begin{figure}
    \centering
    \includegraphics[width=0.45\textwidth]{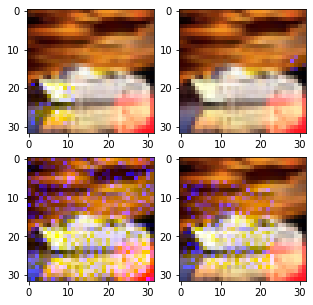}
    \caption{Perturbation with random location and size}
    \label{fig:noiseRand}
\end{figure}

\subsubsection{Randomizing the Location and Size of the Gaussian Perturbation}
As shown in Figure \ref{fig:noiseRand}, randomizing the location and size of the added random Gaussian perturbation is another trick that we apply to enhance the performance of \textbf{\textsc{TrojDef}}. Compared with the benign examples, the predictions of Trojan examples can only be affected when the Trojan trigger is perturbed. Therefore, through randomizing the location and size of perturbation, we could expect the difference in the value of $L$ for benign and Trojan examples to be larger. In the implementation, \textbf{\textsc{TrojDef}} randomly selects the location and size of the random Gaussian perturbation for each perturbed image. As shown in Figure \ref{fig:noiseRand}, we utilize a square area, and its size can be any integer value between 2 pixels to the size of the image. Depending on the size, the location is randomly selected starting from the top-left corner (i.e., [0,0]) to the limit that keeps the perturbation within the image area. 
\fatima{
In Table \ref{table:rand-size-exp}, we examined the effectiveness of randomizing the location and size of the Gaussian perturbation enhancements. It is clear that this enhancement highly affects the results because when applying the perturbation on the whole image it has a higher chance to change the appearance of the trigger.
}

\begin{algorithm} \caption{Preparation Phase of \textsc{\textbf{TrojDef}}} \label{algorithm:preparation-implementation}
\begin{algorithmic}[1]
\Require A trained classifier with weight parameter $\theta$ and an FRR
\Ensure Detection threshold $\tau$
\State Preparing $n$ different benign examples
\For{Each benign example $\hat{x}$}
    \State \textcolor{blue}{Flatten the pixel values in $\hat{x}$}
    \State \textcolor{blue}{Calculate the average of top-$k$ pixel values and store as $v$}
    \State \textcolor{blue}{Calculate $\sigma = -( S * \log_2 v )$}
    \For{$m$ iterations}
        \State \textcolor{blue}{Sample a random size perturbation $\eta$ from Gaussian distribution $\mathcal{N}(0, \sigma)$}
        \State \textcolor{blue}{Add $\eta$ to the blue channel of $\hat{x}$ at a random location}
        \State Store the prediction $C_{\theta}(\hat{x}+\eta)$
    \EndFor
    \State Calculate $p_{1}$ and $p_{2}$ for this example
    \State Calculate $d = \alpha \times [(p_{1} - p_{2}) \times \sigma - \beta]$
    \State Calculate and store prediction confidence bound $L = \frac{1}{1 + e^{-d}}$
\EndFor
\State Select the $\tau$ to be higher than the $(1-FRR)\times{100}\%$ percentile of the $L$ values.
\end{algorithmic}
\end{algorithm}
\begin{algorithm} \caption{Detection Phase of \textsc{\textbf{TrojDef}}} \label{algorithm:run-implementation}
\begin{algorithmic}[1]
\Require A trained classifier with weight parameter $\theta$, the threshold $\tau$, and an arbitrary input $x$
\Ensure The prediction
\State \textcolor{blue}{Flatten the pixel values in $x$}
\State \textcolor{blue}{Calculate the average of top-$k$ pixel values and store as $v$}
\State \textcolor{blue}{Calculate $\sigma = -( S * \log_2 v )$}
\For{$m$ iterations}
    \State \textcolor{blue}{Sample a random size perturbation $\eta$ from Gaussian distribution $\mathcal{N}(0, \sigma)$}
    \State \textcolor{blue}{Add $\eta$ to the blue channel of $x$ at a random location}
    \State Store the prediction $C_{\theta}(x+\eta)$
\EndFor
\State Calculate the $p_{1} \text{ and } p_{2}$ for $x$
\State Calculate $d = \alpha \times [(p_{1} - p_{2}) \times \sigma - \beta]$
\State Calculate the prediction confidence bound $L = \frac{1}{1 + e^{-d}}$
\If{$L > \tau$}
    \State Output the alarm that $x$ could be a Trojan input
\Else
    \State Output $C_{\theta}(x)$
\EndIf
\end{algorithmic}
\end{algorithm}

\subsubsection{Dynamic Standard Deviation}
Based on our experiments with a fixed value of $\sigma$ for the added Gaussian noise, we observe that the results are sensitive to the value of $\sigma$ in some cases \fatima{as illustrate in Table \ref{table:dyn-sigma}}. Depending on the combinations of the NN classifiers and Trojan triggers, using a fixed $\sigma$ value may work in some cases but fails in others since each case has different prediction confidence under the same perturbation. By making $\sigma$ dynamically changing based on the pixel values in each image, we are able to overcome this issue and achieve a good performance in separating the benign and Trojaned images. In our implementation, the following formula is used to calculate $\sigma$ for the added Gaussian noise to the pixels of each image:
\begin{align}
    \sigma = -( S * \log_2 v ) \label{eq:dynamic-sd}
\end{align}
Here, $S$ is a scalar, $v$ is the average of the largest $k$ pixel values in the whole image. To prevent $\sigma$ from getting a value outside of the $[0,1]$ range, we include default values to limit $\sigma$ to be within this range. \guanxiongVtwo{By utilizing Eq.\ref{eq:dynamic-sd}, the added noise could be controlled with respect to the visual content in the image. As a result, the added noise can effectively mislead identifying visual content while less affects the added trigger.} With this dynamic standard deviation, the values of $L$ for benign examples do not change much since the corresponding $\delta$ is small. For Trojan examples, \textbf{\textsc{TrojDef}} tends to use a smaller standard deviation when the pixel values are high (i.e., bright image). Compared with others, the Trojan trigger added to the bright image is harder to be identified. Therefore, applying noise with a smaller standard deviation helps Trojan examples to get a higher value of $\delta$ as well as $L$. It is worth noting that dynamically controlling the standard deviation values demonstrates the adaptability of \textbf{\textsc{TrojDef}} to better fit the input data, which is impossible with other state-of-the-art approaches, such as STRIP.

With all practical enhancements, the overall process from the preparation phase to making a prediction on input is summarized in Algorithms \ref{algorithm:preparation-implementation} and \ref{algorithm:run-implementation}. 
\fatima{
To show the enhancement of combining different empirical enhancements, we present evaluation results in Table \ref{table:no-enhance} that covers experiments with different combinations of presented empirical enhancements.
}

\begin{table*}[tb]
    \small
    \begin{center}
    \begin{tabular}{| c | c | c | c |  c | c | c | c | } 
      \hline
    dataset &  Convluation & Flatten & Dense &  Dropout & batch normalization & activation & Pooling \\
    \hline
       CIFAR-10  & 6 & 1 & 1 & 3 & √ & ReLU & 2 MaxPooling \\
     \hline
        GTSRB\tnote{*}  & 20 & 1 & 1 & 3 & √ & ReLU & 1 AveragePooling  \\
       \hline
     \end{tabular}
    \end{center}
    \caption{\textbf{\textsc{TrojDef}}-model architecture}
    \label{table:clf-archi}
\end{table*}

\section{Experimental Settings}\label{sec:setting}

In this section, we first introduce the datasets and the classifiers' architecture that are used. Then, we present the experiments and the calculated metrics.

\subsection{Datasets and Classifiers}

During the evaluation, we use the multiple benchmark datasets with different image size, number of samples and content to demonstrate that the advantage of our method over STRIP is independent from dataset:

$\bullet$\hspace{3mm}\textbf{MNIST: }Contains a total of 70K images and their labels. Each one is a $28 \times 28$ pixel, gray scale image of handwritten digits.


$\bullet$\hspace{3mm}\textbf{CIFAR-10: }Contains a total of 60K images and their labels. Each one is a $32 \times 32$ pixel, RGB image of animals or vehicles.

$\bullet$\hspace{3mm}\textbf{GTSRB: }Contains over 50K images and their labels. Each one is an RGB image of traffic signs with different sizes.

$\bullet$\hspace{3mm}\textbf{CUB-200: }Contains over 10K images with 200 classes. Each one is an RGB image of a bird with size of $300 \times 500$.

$\bullet$\hspace{3mm}\fatima{\textbf{ImageNet: }Contains over 14M images with 1000 classes. Each one is an RGB image.}

During the experiments, we include three different kinds of NN classifiers. (1) \textbf{STRIP-model:} The NN classifiers provided by the author of \cite{gao2019strip}. (2) \textbf{\textbf{\textsc{\textbf{\textsc{TrojDef}}}}-model:} The NN classifiers trained by us from scratch. (3) \textbf{3rd-party-model:} The ResNet-50 classifiers \cite{he2016deep} that are pre-trained by a 3rd party (we apply poisoned transfer learning to implant the Trojan backdoor). A brief summary of \textbf{\textsc{\textbf{\textsc{TrojDef}}}}-model architecture is presented in the Table \ref{table:clf-archi}.

\begin{figure*}[tb]
\centering
\begin{minipage}[c]{.13\linewidth}
    \begin{minipage}[c]{\textwidth}
    \centering
        \includegraphics[width=\linewidth]{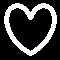}
    \end{minipage}
    \subcaption{heart}
    \label{sfig:testb}
\end{minipage}
\begin{minipage}[c]{.13\linewidth}
    \begin{minipage}[c]{\textwidth}
    \centering
        \includegraphics[width=\linewidth]{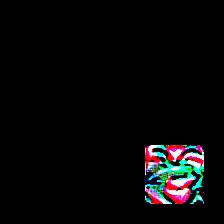}
    \end{minipage}
    \subcaption{face}
    \label{sfig:testc}
\end{minipage}
\begin{minipage}[c]{.13\linewidth}
    \begin{minipage}[c]{\textwidth}
    \centering
        \includegraphics[width=\linewidth]{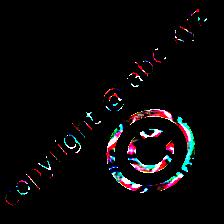}
    \end{minipage}
    \subcaption{watermark}
    \label{sfig:testd}
\end{minipage}
\begin{minipage}[c]{.13\linewidth}
    \begin{minipage}[c]{\textwidth}
    \centering
        \includegraphics[width=\linewidth]{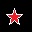}
    \end{minipage}
    \subcaption{star}
    \label{sfig:testd}
\end{minipage}
\begin{minipage}[c]{.13\linewidth}
    \begin{minipage}[c]{\textwidth}
    \centering
        \includegraphics[width=\linewidth]{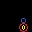}
    \end{minipage}
    \subcaption{bottle}
    \label{sfig:testd}
\end{minipage}
\begin{minipage}[c]{.13\linewidth}
    \begin{minipage}[c]{\textwidth}
    \centering
        \includegraphics[width=\linewidth,height=\linewidth]{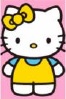}
    \end{minipage}
    \subcaption{Hello Kitty}
    \label{sfig:testd}
\end{minipage}
\begin{minipage}[c]{.13\linewidth}
    \begin{minipage}[c]{\textwidth}
    \centering
        \includegraphics[width=\linewidth,height=\linewidth]{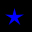}
    \end{minipage}
    \subcaption{blue star}
    \label{sfig:blueStar}
\end{minipage}
\caption{Trojan triggers used in the experiments}
\label{fig:trigger-visualization}
\end{figure*}

\subsection{Experiments and Metrics}

We compare \textbf{\textsc{\textbf{\textsc{TrojDef}}}} to STRIP due to the following reasons: (1) To the best of our knowledge, STRIP is the only black-box defense method, (2) STRIP achieves similar performance to other state-of-the-art white-box defenses as indicated in \cite{gao2019strip}. To comprehensively compare \textbf{\textsc{\textbf{\textsc{TrojDef}}}} with STRIP, we evaluate both defense methods on the three different models that are introduced before (i.e., STRIP-model, \textbf{\textsc{TrojDef}}-model, and 3rd-party-model). When evaluating with the STRIP-model, we try different training hyper-parameters. Moreover, the experiments with \textbf{\textsc{TrojDef}}-model and 3rd-party-model also include new Trojan triggers. Lastly, to explore the generalizability of \textbf{\textsc{\textbf{\textsc{TrojDef}}}} to different types of noise distributions, we run some of the experiments with Laplacian noise instead of Gaussian noise.

Throughout the experiments, we mainly focus on four different metrics. Among these metrics, we utilize the classification accuracy (\textbf{Acc}) and attack success rate (\textbf{Attack-Acc}) to evaluate the NN classifier that is infected by the Trojan attack.
\begin{itemize}
    \item \textbf{Acc: }The percentage of correctly classified benign examples over all benign examples.
    \item \textbf{Attack-Acc: }The percentage of Trojan examples that are classified into the adversary's target class when no defense is applied.
\end{itemize}
A Trojan infected NN classifier is trained to achieve high Acc and Attack-Acc simultaneously. The high Acc objective is to ensure that the classifier is of high quality to be adopted and used, while the high Attack-Acc objective ensures a successful attack. 

During the evaluation of the defense methods, we use the false acceptance rate (\textbf{FAR}) and the false rejection rate (\textbf{FRR}) as the performance metrics.
\begin{itemize}
    \item \textbf{FAR: }The percentage of Trojan examples that can pass the deployed defense method. The lower the FAR, the better the defense.
    \item \textbf{FRR: }The percentage of benign examples that are accidentally rejected by the deployed defense method. The lower the FRR, the better the defense.
\end{itemize}
Unless otherwise specified, we test both \textbf{\textsc{\textbf{\textsc{TrojDef}}}} and STRIP with a threshold value of the 99 percentile among benign examples. In other words, the FRR for both defenses is fixed at 1\%. Therefore, in the evaluation results, a better defense method should have a lower value of FAR.

Finally, we visualize the Trojan triggers used in the experiments in Figure \ref{fig:trigger-visualization}. When any of these triggers is mentioned, we use the caption of that trigger to refer to it.

\section{Experimental Results}\label{sec:results}

As we mentioned before, our experiments firstly evaluate the performance of \textbf{\textsc{TrojDef}} and STRIP on STRIP-model, \textbf{\textsc{\textbf{\textsc{TrojDef}}}}-model, and 3rd-party-model. Then, we further explore the performance of \textbf{\textsc{TrojDef}} under different settings which include (1) using smaller FRR rates, (2) adding noise that is drawn from a Laplacian random variable, and (3) defending a blue channel Trojan trigger. Lastly, we also compare the performance of our proposed black-box defense with the white-box approaches.

\subsection{Evaluation on STRIP-model}

The first part of the results is generated when STRIP-model is being used. These experiments strictly follow the original settings that are presented in \cite{gao2019strip}. The NN classifiers used in this subsection of experiments are provided directly by the authors of \cite{gao2019strip}. As STRIP has a very high detection accuracy on this model, through the experiments in this subsection, \guanxiong{we try to compare the proposed \textbf{\textsc{TrojDef}} with STRIP on the conventional experiments (i.e., the experiments conducted in STRIP work). The evaluation results are summarized Table \ref{table:conventional-res} \footnote{\fatima{We also present the FAR values under different selected FRR rates in Figure \ref{fig:strip_model_ROC} in the Appendix.}}}

\begin{table}[tb]
    \scriptsize
    \begin{center}
    \begin{tabular}{ c  c | c  c  c  c }
    \hline \hline
    \multirow{2}{*}{Dataset} & \multirow{2}{*}{Trigger} & \multirow{2}{*}{Acc} & \multirow{2}{*}{Attack-Acc} & \multicolumn{2}{c}{FAR} \\
    &  &  &  & STRIP & \textbf{\textsc{TrojDef}} \\
    \hline
    MNIST & "heart" & 99.02\% & 99.99\% & 0.1\% & 0\% \\
    \multirow{2}{*}{CIFAR-10} 
    & "face" & 83.84\% & 100\% & 0\% & 0\% \\
    & "watermark" & 82.35\% & 100\% & 0\% & 0\% \\
    %
    \hline \hline
    \end{tabular}
    \end{center}
    \caption{Results of the conventional experiments}
    \label{table:conventional-res}
\end{table}

Based on the value of Acc and Attack-Acc presented in Table \ref{table:conventional-res}, it is clear that the NN classifiers have been infected by the Trojan attack. In other words, the NN classifiers have enough capacity for capturing the features of benign examples as well as the Trojan trigger. These results validate that the performance of defense methods measured on top of the NN classifiers are reliable.

Under each combination of the dataset and Trojan trigger, we present the value of FAR for both \textbf{\textsc{TrojDef}} and STRIP. We can see that both defenses achieve 0\% FAR. Compared with the results presented in \cite{gao2019strip}, the performance of our reproduced STRIP is validated. More importantly, based on the conventional experiments, \textbf{\textsc{TrojDef}} achieves the same performance level as that of STRIP. In other words, there is no difference in terms of performance on conventional experiments between \textbf{\textsc{\textbf{\textsc{\textbf{\textsc{TrojDef}}}}}} and STRIP. However, in the following subsection, we can see that \textbf{\textsc{\textbf{\textsc{\textbf{\textsc{TrojDef}}}}}} outperforms STRIP when these experimental settings change.

\guanxiong{In addition to directly utilizing the STRIP-model, we also expand the experiments to evaluate the two defense methods when the hyper-parameters of the NN classifiers are changed. Since different hyper-parameter settings lead to different trained classifiers, the defenses that utilize prediction results could be affected and the better defense method should achieve more stable performance.} Here, we use the same architecture as STRIP-model but train it with different hyper-parameters. In these experiments, we choose three different hyper-parameters which include training epoch (\textbf{epoch}), learning rate (\textbf{lr}), and batch size (\textbf{bs}). We select the value of training epoch to be either 12 or 20. For learning rate, the possible values are $1e^{-4}$, $1e^{-3}$, and $3e^{-3}$. The batch size value varies between 60, 128, and 200. It is worth noting that these experiments are performed on MNIST dataset with "heart" trigger. The results of both \textbf{\textsc{TrojDef}} and STRIP are presented in Table \ref{table:expanded-hyper}.

\begin{table}[tb]
    \scriptsize
    \begin{center}
    \begin{tabular}{ c | c  c  c  c }
    \hline \hline
    \multirow{2}{*}{Hyper-parameters} & \multirow{2}{*}{Acc} & \multirow{2}{*}{Attack-Acc} & \multicolumn{2}{c}{FAR} \\
    &  &  & STRIP & \textbf{\textsc{TrojDef}} \\
    \hline
    epoch = 12 & 98.75\% & 99.54\% & 0.3\% & 0\% \\
    epoch = 20 & 98.96\% & 100\% & 17.05\% & 0\% \\
    \hline
    lr = $1e^{-4}$ & 98.76\% & 99.54\% & 20\% & 0\% \\
    lr = $1e^{-3}$ & 98.96\% & 100\% & 17.05\% & 0\% \\
    lr = $3e^{-3}$ & 98.66\% & 99.93\% & 1.05\% & 0\% \\
    \hline
    bs = 64 & 98.64\% & 100\% & 0.1\% & 0\% \\
    bs = 128 & 98.96\% & 100\% & 17.05\% & 0\% \\
    bs = 200 & 99.03\% & 100\% & 8.40\% & 0\% \\
    \hline \hline
    \end{tabular}
    \end{center}
    \caption{Performance of the defenses when the NN classifier is trained with different hyper-parameters}
    \label{table:expanded-hyper}
\end{table}

From the results, it is clear that \textbf{\textsc{TrojDef}} achieves more stable performance than that of STRIP when different hyper-parameters are used. Moreover, throughout the experimental results, \textbf{\textsc{TrojDef}} always achieves lower FAR value than that of STRIP. In addition, the FAR value of STRIP has a much obvious fluctuation compared to that of \textbf{\textsc{TrojDef}}. For example, the FAR for STRIP changes from 0.10\% to 17.05\% when the batch size changes from 60 to 128. When the learning rate changes, the FAR values for STRIP reach as high as 20\%. Basically, when different hyper-parameter settings are applied, the model with the same architecture may converge to different weight parameters. The results in Table \ref{table:expanded-hyper} show that only the changes in weight parameters are enough to largely degenerate the performance of STRIP. It is worth noting that the owner of the model is the one who decides the hyper-parameter settings, and there are always more than one setting that could work. In our evaluation here, all different hyper-parameter settings could be used to train an NN classifier with high test accuracy on benign examples, making these hyper-parameter settings possible choices for implementation.

\subsection{Evaluation on \textbf{\textsc{TrojDef}}-model}

In this part of the experiments, we evaluate both defenses (\textbf{\textsc{TrojDef}} and STRIP) in a broader range of settings. More specifically, we utilize (1) the \textbf{\textsc{TrojDef}}-model which has a different architecture than the model in the previous subsection, (2) the GTSRB dataset which is not evaluated in \cite{gao2019strip}, and (3) new Trojan triggers (i.e., "bottle" and "star"). The evaluation results are summarized in Table \ref{table:sensitivity-res-1} \footnote{\fatima{We also present the FAR values under different selected FRR rates in Figure \ref{fig:our_model_ROC} in the Appendix.}}.

\begin{table}[tb]
    \scriptsize
    \begin{center}
    \begin{tabular}{ c  c | c  c  c  c }
    \hline \hline
    \multirow{2}{*}{Dataset} & \multirow{2}{*}{Trigger} & \multirow{2}{*}{Acc} & \multirow{2}{*}{Attack-Acc} & \multicolumn{2}{c}{FAR} \\
    &  &  &  & STRIP & \textbf{\textsc{TrojDef}} \\
    \hline
    \multirow{4}{*}{CIFAR-10} 
    & "face" & 85.73\% & 100\% & 0\% & 0\% \\
    & "watermark" & 85.61\% & 100\% & 0\% & 0\% \\
    & "bottle" & 84.82\% & 99.30\% & 1.10\% & 0.15\% \\
    & "star" & 84.76\% & 100\% & 0\% & 0\% \\
    \hline
    \multirow{4}{*}{GTSRB} 
    & "face" & 99.85\% & 100\% & 100\% & 0\% \\
    & "watermark" & 99.80\% & 100\% & 100\% & 0\% \\
    & "bottle" & 99.90\% & 100\% & 100\% & 0.05\% \\
    & "star" & 99.89\% & 100\% & 100\% & 0\% \\
    \hline \hline
    \end{tabular}
    \end{center}
    \caption{Evaluation results of the defenses on \textbf{\textsc{TrojDef}}-model}
    \label{table:sensitivity-res-1}
\end{table}

From the values of Acc and Attack-Acc, it is clear that the Trojan backdoor has been successfully implanted to \textbf{\textsc{TrojDef}}-model. Also, from the FAR values in Table \ref{table:sensitivity-res-1}, we see the following.
\begin{enumerate}
    \item When changing from the STRIP-model to \textbf{\textsc{TrojDef}}-model, some of the FAR values of STRIP increase from 0\% to 100\% even for those triggers used in \cite{gao2019strip}.
    \item Compared with STRIP, \textbf{\textsc{TrojDef}} achieves more stable performance. The value of FAR does not change more than 0.15\% regardless of the changes in the classifiers or the Trojan triggers.
\end{enumerate}

The evaluation results in Table \ref{table:sensitivity-res-1} demonstrate clear issues regarding the performance of STRIP. When the NN classifier changes, the performance of STRIP may suffer a significant degeneration. We believe the following reason is related to this issue. \guanxiong{When the architecture is changed, classifiers trained on the same poisoned dataset are different.} Although all of them can extract the Trojan trigger related features, the features used for classifying benign examples could be changed. 
\guanxiong{As a result, some of these classifiers become more sensitive towards the perturbation. In other words, when using the same hold-out data (i.e., benign examples prepared for superimposition process) on such classifiers, the entropy values for benign and Trojan examples are indistinguishable.}

Although fine-tuning could be a solution to this issue, the design of STRIP makes it very difficult if not impossible to perform fine-tuning. Recall that to fine-tune STRIP, we need to collect new hold-out dataset \cite{gao2019strip}. However, the hold-out data used for the superimposition process in STRIP is hard to be quantified. In other words, when collecting new hold-out data, there is no clear guidance about what the new hold-out data should be. Therefore, we think that fine-tuning STRIP is very difficult if not impossible and the issue of unstable performance is unavoidable.


%
\begin{table}[tb]
    \scriptsize
    \begin{center}
    \begin{tabular}{ c  c | c  c  c  c }
    \hline \hline
    \multirow{2}{*}{Dataset} & \multirow{2}{*}{Trigger} & \multirow{2}{*}{Acc} & \multirow{2}{*}{Attack-Acc} & \multicolumn{2}{c}{FAR} \\
    &  &  &  & STRIP & \textbf{\textsc{TrojDef}} \\
    \hline
    \multirow{4}{*}{CIFAR-10} 
    & "face" & 93.12\% & 99.34\% & 100\% & 0\% \\
    & "watermark" & 93.56\% & 99.90\% & 0\% & 0\% \\
    & "bottle" & 93.82\% & 89.48\% & 24.50\% & 19.5\% \\
    & "star" & 93.54\% & 99.62\% & 0\% & 0\% \\
    \hline
    \multirow{4}{*}{GTSRB} 
    & "face" & 98.03\% & 99.30\% & 100\% & 0\% \\
    & "watermark" & 97.46\% & 99.95\% & 100\% & 0\% \\
    & "bottle" & 98.26\% & 99.84\% & 0\% & 0.05\% \\
    & "star" & 98.96\% & 98.04\% & 100\% & 0\% \\
    \hline
    \multirow{2}{*}{CUB-200} & "face" & 61.74\% & 99.14\% & 100\% & 1.15\% \\
    & "watermark" & 62.63\% & 99.86\% & 3.59\% & 0\% \\
     \hline
    \multirow{2}{*}{ImageNet} & "face" & 60.28\% & 27.67\% & 80.15\% & 51.5\% \\
    & "watermark" &  60.40\% & 42.75\% & 80.95\% & 45.35\% \\
    \hline \hline
    \end{tabular}
    \end{center}
    \caption{Evaluation results of the defenses on the 3rd-party-model}
    \label{table:sensitivity-res-2}
\end{table}

\subsection{Evaluation on 3rd-party-model}
In the third part of the experiments, we evaluate \textbf{\textsc{TrojDef}} and STRIP on the 3rd-party-model. The 3rd-party-model brings new angle to the evaluation of the two defenses because of the following:
\begin{itemize}
    \item Compared with the \textbf{\textsc{\textbf{\textsc{TrojDef}}}}-model, the 3rd-party-model is trained in a different way. These NN classifiers are pre-trained on ImageNet data. As a result, the NN classifiers are likely to extract different and more general features than those trained with only the target dataset (e.g. CIFAR-10 and GTSRB).
    
    \item With the development of model sharing platforms (e.g. GitHub and ``Paper with Code''), model reusing is becoming a popular choice especially when a large scale NN classifier is needed. Therefore, the evaluation with a specific focus on a 3rd-party-model is an interesting and important topic.
\end{itemize}

To closely reflect the real-world scenarios, the 3rd-party-model utilizes the ResNet50 NN classifier and is pre-trained on ImageNet data until it converges. After that, we apply transfer learning with these NN classifiers and the poisoned dataset. 
\guanxiong{It is also worth mentioning that our evaluation includes the CUB-200 dataset. This dataset contains images with pixel size around $300 \times 500$ which is the same level as the VGG-Face\cite{huang2008labeled} and ImageNet \cite{imagenet_cvpr09}. Therefore, the evaluation results on CUB-200 dataset also show the generalizability of \textbf{\textsc{TrojDef}}.} 
\fatima{
Last but not the least, we also conduct evaluation with ImageNet dataset to further demonstrate the effectiveness of \textbf{\textsc{TrojDef}}. 
By comparing the evaluation results in Table \ref{table:sensitivity-res-2} \footnote{\fatima{We also present the FAR values under different selected FRR rates in Figure \ref{fig:3rd_model_ROC} in the Appendix.}}, the significant advantage of \textbf{\textsc{TrojDef}} over STRIP still holds. In 9 out of 12 experiments, \textbf{\textsc{TrojDef}} outperforms STRIP (i.e. achieves much lower FAR values), while in other two experiments, both approaches achieve exactly 0\% FAR value. Also, in the experiment with GTSRB dataset and "bottle" trigger, both \textbf{\textsc{TrojDef}} and STRIP can achieve nearly 0\% FAR.
}

\fatima{
It is worth noting that the Attack-Acc on ImageNet is much lower than other datasets.
The reason is that 3rd-party-model is fully trained on ImageNet dataset without attack and we only retrain it a limited number of epochs with backdoor examples.
However, we still observe a large advantage of using \textbf{\textsc{TrojDef}} compared with STRIP in terms of FAR.
}

The 3rd-party-model is more challenging. Although \textbf{\textsc{TrojDef}} still outperforms STRIP, \guanxiongVtwo{it can only achieve about 20\% FAR in one out of 10 experiments}, while achieving very close to perfect accuracy (0\% FAR) on the remaining 9 experiments. STRIP on the other hand performs poorly on this dataset. In other words, the performance of \textbf{\textsc{TrojDef}} degenerates on one of the cases of the 3rd-party-model. We believe the following two reasons explain this observation.
\begin{enumerate}
    \item The Trojan backdoor is implanted to the 3rd-party-model through transfer learning which barely modifies the extracted features. Therefore, the 3rd-party-model learns the Trojan trigger by a set of existing features which is not as stable as other models that identify the Trojan trigger as a fundamental feature \cite{liu2018fine}. As a validation, we can see that the Attack-Acc value on 3rd-party-model is slightly lower than that for other models.
    
    \item The NN classifiers used in 3rd-party-model are pre-trained on a large-scale dataset (e.g., ImageNet) until convergence. To achieve solid performance, these pre-trained NN classifiers are usually optimized to perform consistently even under a certain level of perturbation. As a result, the predictions of some benign examples are quite confident and the added noise level might not be enough to fool the classifier with benign inputs. 
\end{enumerate}

Combining these reasons, we could expect the value of $L$ on benign examples to become larger while the value of $L$ for Trojan examples to become smaller \guanxiong{when the 3rd-party-model is being used}. As a result, it is clear that the overlapping between benign and Trojan examples becomes serious \guanxiong{in this evaluation}. It is worth to note that the aforementioned challenge is not only for \textbf{\textsc{TrojDef}} but also a threat to other defenses that depend on prediction confidence. Therefore, we believe that using the 3rd-party-model is a challenging and important evaluation given the defense methods (i.e., STRIP and \textbf{\textsc{TrojDef}}). Nonetheless, \textbf{\textsc{\textbf{\textsc{\textbf{\textsc{TrojDef}}}}}} achieves decent performance on this model.

\begin{table}[tb] 
    \footnotesize
    \begin{center}
    \begin{tabular}{ c  c |  c  c  c }
    \hline \hline
    \multirow{2}{*}{Dataset} & \multirow{2}{*}{Trigger} & 
    \multirow{2}{*}{FRR} & \multicolumn{2}{c}{FAR} \\
    &  &   & STRIP & \textsc{\textbf{\textbf{\textsc{\textbf{\textsc{TrojDef}}}}}} \\
    \hline
    \multirow{16}{*}{CIFAR-10} 
    &  \multirow{4}{*}{"face"}  &   0.25\% & 0\% & 0\% \\
     &  &   0.5\% & 0\% & 0\% \\
       &   & 0.75\% & 0\% & 0\% \\
    & &   1\% & 0\% & 0\% \\
    \cline{3-5}
    & \multirow{4}{*}{"watermark"} &  0.25\% & 100\% & 0\% \\
    & & 0.5\% & 0\% & 0\% \\
    & & 0.75\% & 0\% & 0\% \\
    & & 1\% & 0\% & 0\% \\
    \cline{3-5}
    & \multirow{4}{*}{"bottle"} & 0.25\% & 100\% & 0.15\% \\
    &  &  0.5\% & 100\% & 0.15\% \\
    &  &  0.75\% & 100\% & 0.15\% \\
    &  &   1\% & 1.10\% & 0.15\% \\
    \cline{3-5}
    & \multirow{4}{*}{"star"} & 0.25\% & 100\% & 0\% \\
    &   & 0.5\% & 100\% & 0\% \\
     &   & 0.75\% & 100\% & 0\% \\
      &   & 1\% & 0\% & 0\% \\
    \hline
    \multirow{16}{*}{GTSRB} 
    & \multirow{4}{*}{"face"} &  0.25\% & 100\% & 0\% \\
     &   & 0.5\% & 100\% & 0\% \\
      &  & 0.75\% & 100\% & 0\% \\
       &  & 1\% & 100\% & 0\% \\
    \cline{3-5}
    & \multirow{4}{*}{"watermark"} & 0.25\%& 100\% & 0\% \\
       & &  0.5\% & 100\% & 0\% \\
      & &   0.75\% & 100\% & 0\% \\
       & &   1\% & 100\% & 0\% \\
    \cline{3-5}
    & \multirow{4}{*}{"bottle"} &  0.25\% & 100\% & 0.05\% \\
   &   & 0.5\% & 100\% & 0.05\% \\
   &   & 0.75\% & 100\% & 0.05\% \\
   & &  1\% & 100\% & 0.05\% \\
    \cline{3-5}
    & \multirow{4}{*}{"star"} &  0.25\% & 100\% & 0\% \\
       &  & 0.5\% & 100\% & 0\% \\
      &   & 0.75\% & 100\% & 0\% \\
       & &  1\% & 100\% & 0\% \\
    \hline \hline
    \end{tabular}
    \end{center}
    \caption{Evaluation results of the defenses on \textbf{\textsc{TrojDef}}-model under different FRR values}
    \label{table:diff-frr-1}
\end{table}

\begin{table}[tb] 
    \footnotesize
    \begin{center}
    \begin{tabular}{ c  c |  c  c  c }
    \hline \hline
    \multirow{2}{*}{Dataset} & \multirow{2}{*}{Trigger} &  \multirow{2}{*}{FRR} & \multicolumn{2}{c}{FAR} \\
    &   &  & STRIP & \textbf{\textsc{TrojDef}} \\
    \hline
    \multirow{19}{*}{CIFAR-10} 
    & \multirow{4}{*}{"face"}  & 0.25\% & 100\% & 0\% \\
     & & 0.5\% & 100\% & 0\% \\
     & & 0.75\% & 100\% & 0\% \\
     & & 1\% & 100\% & 0\% \\
    \cline{3-5}
    & \multirow{4}{*}{"watermark"} &  0.25\% & 0\% & 0\% \\
     &  &  0.5\% & 0\% & 0\% \\
      &  &  0.75\% & 0\% & 0\% \\
       &  &  1\% & 0\% & 0\% \\
    \cline{3-5}
    & \multirow{4}{*}{"bottle"} & 0.25\% & 39.8\% & 32.55\% \\
     &  & 0.5\% & 28.249\% & 22.0\% \\
      &  & 0.75\% & 25.83\% & 22.0\% \\
       &  & 1\% & 24.50\% & 19.5\% \\
    \cline{3-5}
    & \multirow{4}{*}{"star"} &  0.25\% & 100\% & 0\% \\
    & &  0.5\% & 100\% & 0\% \\
     & &  0.75\% & 100\% & 0\% \\
      & &  1\% & 0\% & 0\% \\
    \hline
    \multirow{19}{*}{GTSRB} 
    & \multirow{4}{*}{"face"} & 0.25\% & 100\% & 0\% \\
     &  & 0.5\% & 100\% & 0\% \\
     &  & 0.75\% & 100\% & 0\% \\
     &  & 1\% & 100\% & 0\% \\
    \cline{3-5}
    & \multirow{4}{*}{"watermark"} & 0.25\% & 100\% & 0\% \\
     &  & 0.5\% & 100\% & 0\% \\
      &  & 0.75\% & 100\% & 0\% \\
       & & 1\% & 100\% & 0\% \\
    \cline{3-5}
    & \multirow{4}{*}{"bottle"} & 0.25\% & 100\% & 0.05\% \\
    & & 0.5\% & 100\% & 0.05\% \\
     & & 0.75\% & 0.05\% & 0.05\% \\
      & & 1\% & 0\% & 0.05\% \\
    \cline{3-5}
    & \multirow{4}{*}{"star"} &  0.25\% & 100\% & 0\% \\
     &  &  0.5\% & 100\% & 0\% \\
      &  &  0.75\% & 100\% & 0\% \\
       &  &  1\% & 100\% & 0\% \\
    \hline
    \multirow{9}{*}{CUB-200} & \multirow{4}{*}{"face"} & 0.25\% & 100\% & 1.5\% \\
      &  & 0.5\% & 100\% & 1.25\% \\
       &  & 0.75\% & 100\% & 1.25\% \\
        &  & 1\% & 100\% & 1.15\% \\
    \cline{3-5}
     & \multirow{4}{*}{"watermark"} & 0.25\% & 4.9\% & 0.05\% \\
     & & 0.5\% & 4.1\% & 0\% \\
      & & 0.75\% & 3.8\% & 0\% \\
         & & 1\% & 3.59\% & 0\% \\
    \hline \hline
    \end{tabular}
    \end{center}
    \caption{Evaluation results of the defenses on the 3rd-party-model under different FRR}
    \label{table:diff-frr-2}
\end{table}

\subsection{Using Different FRR Values}

\guanxiong{In previous experiments, we select the FRR value to be 1\%. However, in real world scenarios, the requirements on selected threshold varies and it is important to report the performance of the defense methods under different FRR values. Therefore, in this subsection, we repeat some of the experiments on both \textbf{\textsc{TrojDef}}-model and 3rd-party-model. Instead of using a fixed FRR value, we change it to be from the following set: $\{0.25,~0.5,~0.75,~1.0\}$. The results of these experiments are summarized in Tables \ref{table:diff-frr-1} and \ref{table:diff-frr-2}.}

\guanxiong{Based on the results it is clear that the FAR increases when the FRR decreases since there is a trade-off between detecting all potential Trojan inputs and reducing the false positive alarm. However, when we compare the detailed FAR values of STRIP and \textbf{\textsc{TrojDef}}, we can see that the \textbf{\textsc{TrojDef}} significantly outperforms STRIP. For example, on CIFAR-10 dataset with "star" trigger and \textbf{\textsc{TrojDef}}-model (Table \ref{table:diff-frr-1}), the proposed defense consistantly achieves $0.15\%$ FAR while the FAR of STRIP goes to $100\%$ when the $FRR$ is set to $0.75$ or lower. Similar observation can be obtained from Table \ref{table:diff-frr-2} as well (e.g., CIFAR-10 dataset with "bottle" trigger and 3rd-party-model). Compared with STRIP, these results show that \textbf{\textsc{TrojDef}} is a better defense method which can achieve very small FAR values when small target values are selected for FRR.}

\subsection{Using Laplacian Perturbation}

As presented in Section \ref{sec:defense}, \textbf{\textsc{TrojDef}} is designed to work with perturbations sampled from an arbitrary distribution as long as it closely approximates the distribution of the pixel values in training dataset. In order to validate this claim, we repeat the experiments with 3rd-party-model on CIFAR-10 and GTSRB datasets. During the evaluation, we replace the Gaussian perturbations with Laplacian ones. The results are summarized in Table \ref{table:laplacian-res}.

\begin{table}[tb]
    \footnotesize
    \begin{center}
    \begin{tabular}{ c | c  c  c  c }
    \hline \hline
    \multirow{2}{*}{Dataset} & \multicolumn{4}{c}{Trigger} \\
    & "face" & "watermark" & "bottle" & "star" \\
    \hline
    CIFAR-10 & 0\% & 0\% & 34.30\% & 0\% \\
    GTSRB & 0\% & 0\% & 0.05\% & 0\% \\
    \hline \hline
    \end{tabular}
    \end{center}
    \caption{Evaluation results of \textbf{\textsc{TrojDef}} on the 3rd-party-model and Laplacian Perturbation}
    \label{table:laplacian-res}
\end{table}

From these results, we can see that in 7 out of 8 cases using Laplacian perturbation \textbf{\textsc{TrojDef}} achieves the same FAR value as before. Only in the case of CIFAR-10 dataset and "bottle" trigger, using Laplacian perturbation degenerates the performance of \textbf{\textsc{\textbf{\textsc{\textbf{\textsc{TrojDef}}}}}}. We think that Gaussian perturbation is better than Laplacian perturbation for CIFAR-10 dataset. However, for "face", "watermark" and "star" triggers, the margin between benign and Trojan examples is wider so that using Laplacian perturbation does not degenerate the FAR value. While for "bottle" trigger, differentiating benign and Trojan examples is much harder and replacing the Gaussian perturbation with Laplacian perturbation leads to a lower FAR value. This can be validated by the results in Table \ref{table:sensitivity-res-2}. When using Gaussian perturbation, the FAR value is 22.10\% for "bottle" trigger while it is 0\% for the other triggers.

\subsection{Defending Blue Channel Trigger}

\begin{figure}
    \centering
    \includegraphics[width=0.45\textwidth]{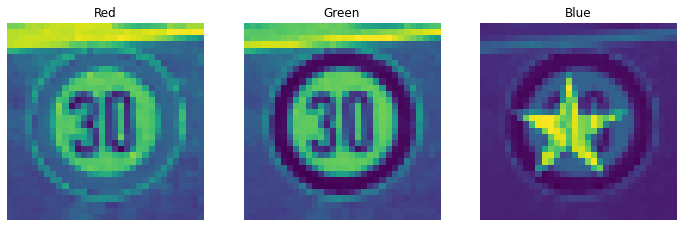}
    \caption{By channel view of blue channel trigger}
    \label{fig:RGBChan}
\end{figure}

Recall Sec \ref{sec:defense}-D, we present the single channel perturbation as one of the practical enhancements of our proposed defense. To complete our justification of adding perturbation to blue channel, in this subsection, we conduct an additional experiment to evaluate the performance of our proposed defense when the Trojan trigger lives in the blue channel. As shown in Figure \ref{fig:RGBChan}, we customized a "blue star" trigger which is added to only the blue channel of input examples. With this Trojan trigger, we evaluate the performance of \textbf{\textsc{TrojDef}} on different models as well as datasets. From the results summarized in Table \ref{table:blue-blue}, it is clear that the performance of \textbf{\textsc{TrojDef}} is not affected even if the Trojan trigger lives only in the blue channel.

\begin{table}[tb]
    \parbox{\linewidth}{
        \centering
        \begin{tabular}{ c  c | c  c   }
            \hline \hline
          Trigger&
           Model & Dataset &
           FAR \\
             \hline
            \multirow{6}{*}{"blue star"} &   \multirow{2}{*}{\textbf{\textsc{TrojDef}}}& CIFAR-10  
             & 0.0\% \\
             \cline{3-4}
            &   & GTSRB& 0.0\% \\
               \cline{2-4}
             &   \multirow{2}{*}{STRIP}& CIFAR-10  
             & 0.0\% \\
             \cline{3-4}
            &   & GTSRB& 0.0\% \\
             \cline{2-4}
             &   \multirow{2}{*}{3rd-party}& CIFAR-10  
             & 0.0\% \\
             \cline{3-4}
            &   & GTSRB& 0.0\% \\
            \hline \hline

        \end{tabular}
        
        \caption{Performance of defending the blue channel trigger}
        \label{table:blue-blue}
}
\end{table}

\subsection{Compared with White-Box Defense} 

\fatima{In this experiment, we use the proposed defense in \cite{jin2020unified} and we refer to it as Mutation defense. Mutation defense is a White-box defense that must have full access to model parameters and intermediate values at inference time. It generates m mutated model by adding Gaussian noise to the weights of the fully-connected layers. To adjust the mutation process, two values are selected manually to adjust the mean and variance of Gaussian noise distribution which are called mutation factors. For each layer, the mean value of the Gaussian noise distribution is calculated by multiplying the mean mutation factor by the mean of the fully-connected layer weights and the variance value of the Gaussian noise distribution is calculated by multiplying the variance mutation factor by the maximum weight value in a fully-connected layer. The intuition behind this approach is that the Trojaned inputs appear to have higher sensitivity to mutations on a NN model than benign inputs. Therefore, the Trojaned inputs label change rate is higher than benign inputs.}

\fatima{We compare \textbf{\textsc{TrojDef}}  with Mutation defense in Table \ref{table:white-box-rs}. 
It is clear that the performance of Mutation defense fluctuates significantly when facing different combinations of dataset, model and trigger. 
Although we tune the mutation factors to mitigate this issue, our attempts fail especially on the CIFAR-10 dataset. 
Moreover, on GTSRB dataset with \textbf{\textsc{TrojDef}} model, the FAR of Mutation defense varies from $7.65\%$ to $28.80\%$ which confirms the unstable performance of this defense.
In general, from the results, we conclude that Mutation defense works in some of our evaluation cases while fails in other cases. Also, we found that tuning mutation factors is not enough to enhance Mutation defense in the poorly performed cases.}

\begin{table}[tb] 
     \footnotesize
    \begin{center}
    \begin{tabular}{ c  c |  c  c  c }
    \hline \hline
    \multirow{2}{*}{Dataset} & \multirow{2}{*}{Trigger} & 
    \multirow{2}{*}{model} & 
    \multicolumn{2}{c}{FAR} \\
    &  &   & Mutation & \textsc{\textbf{\textbf{\textsc{\textbf{\textsc{TrojDef}}}}}} \\
    \hline
   \multirow{2}{*}{ MNIST}
    &  "square"  & \cite{jin2020unified} model &  0.01\% & 0\%  \\
    &  "heart"  & \textsc{\textbf{\textbf{\textsc{\textbf{\textsc{TrojDef}}}}}} &   65.0\% & 0\%  \\
   
    \hline
    
    \multirow{7}{*}{CIFAR-10} 
    
    &  \multirow{3}{*}{"face"}  &STRIP &  100.0\% & 0\%  \\
      &  & \textsc{\textbf{\textbf{\textsc{\textbf{\textsc{TrojDef}}}}}}&   100.0\% & 0\% \\
       &   &3red-party& 100.0\% & 0\%  \\
   
    \cline{3-5}
    & \multirow{2}{*}{"watermark"} &  STRIP & 99.95\% & 0\% \\
    & &  \textsc{\textbf{\textbf{\textsc{\textbf{\textsc{TrojDef}}}}}} & 84.75\% & 0\% \\
    
    \cline{3-5}
    &"bottle" &\textsc{\textbf{\textbf{\textsc{\textbf{\textsc{TrojDef}}}}}} & 100.0\% & 0.15\% \\
    
    \cline{3-5}
    & "star" & \textsc{\textbf{\textbf{\textsc{\textbf{\textsc{TrojDef}}}}}} & 100.0\% & 0.0\% \\
   
    \hline
    \multirow{5}{*}{GTSRB} 
    & \multirow{2}{*}{"face"} & STRIP & 100.0\% & 0.0\% \\
     &   & \textsc{\textbf{\textbf{\textsc{\textbf{\textsc{TrojDef}}}}}} & 7.65\% & 0.0\% \\
      
    \cline{3-5}
    & "watermark" &\textsc{\textbf{\textbf{\textsc{\textbf{\textsc{TrojDef}}}}}} & 20.95\% & 0.0\% \\
       
    \cline{3-5}
    & "bottle" &  \textsc{\textbf{\textbf{\textsc{\textbf{\textsc{TrojDef}}}}}} & 12.95\% & 0.05\% \\
  
    \cline{3-5}
    & "star" &  \textsc{\textbf{\textbf{\textsc{\textbf{\textsc{TrojDef}}}}}} & 28.80\% & 0.0\% \\
   
    \hline \hline
    \end{tabular}
    \end{center}
    \caption{Evaluation results of the Mutation and \textbf{\textsc{TrojDef}} model}
    \label{table:white-box-rs}
\end{table}

\section{Conclusion}\label{sec:conclusion}

In this work, we propose an adaptive black-box defense against Trojan attacks, dubbed \textbf{\textsc{TrojDef}}. \textbf{\textsc{TrojDef}} perturbs each input example with random Gaussian noise and utilizes the prediction of the perturbed examples to decide whether the input example contains the Trojan trigger or not. We show analytically that under restricted conditions \textbf{\textsc{TrojDef}} can always differentiate benign from Trojan examples by deriving prediction confidence bound. We also propose a non-linear transformation to the prediction confidence bound to enable accurate detection of Trojan examples when the restricted conditions do not hold.  We also propose several practical enhancements to \textbf{\textsc{TrojDef}}, especially when the input examples are images. We conduct several experiments to compare \textbf{\textsc{TrojDef}} with the SOTA black-box approach, STRIP. The results show that \textbf{\textsc{TrojDef}} has a competitive performance on all the experiments proposed by STRIP. Moreover, the results in the expanded experiments show that \textbf{\textsc{TrojDef}} not only outperforms STRIP but is also more stable. The performance of STRIP may significantly degenerate when (1) the NN classifiers' training hyper-parameters change or (2) the NN classifier's architecture changes. Under similar settings, \textbf{\textsc{TrojDef}} provides consistent performance. In addition, we evaluate \textbf{\textsc{TrojDef}} and STRIP on a more realistic scenario when the Trojan backdoor is implanted in a large-scale NN classifier pre-trained on other datasets. The results show that \textbf{\textsc{TrojDef}} significantly outperforms STRIP under such challenging settings. Finally, by replacing the Gaussian perturbation with Laplacian ones, the results confirm the generalizability of the \textbf{\textsc{TrojDef}} to arbitrary datasets and arbitrary noise distributions. The main reason for this superior performance is that \textbf{\textsc{TrojDef}} is controllable and can easily adapt to the presented examples by changing the parameters of the distribution of the added random noise.

\section{Limitations and Future Work} \label{sec:future}

\guanxiongVtwo{
Based on Section \ref{sec:defense}, it is not hard to imagine that if the prediction on Trojan example is sensitive towards the added noise, the performance of \textbf{\textsc{TrojDef}} will be degenerated. We observe this degeneration when evaluating \textbf{\textsc{TrojDef}} against the Hello Kitty pattern trigger presented in \cite{chen2017targeted} with 90\% transparency. The results are summarized in Table \ref{table:hello-kitty}. Although preparing this Trojan attack require the attacker to perturb the entire image which is more visible in human eyes, we think there are some interesting problems that are worth studying in the future.
}

\begin{table}[h]
    \scriptsize
    \begin{center}
    
    \begin{tabular}{ c  c | c  c  c }
    \hline \hline
    Dataset  &
    Model &
    Acc &
    Attack-Acc &
    FAR \\
     \hline
    \multirow{2}{*}{CIFAR-10}  
    &STRIP & 78.57\% & 68.67\% &   91.4\%   \\
    &   \textbf{\textsc{TrojDef}}& 78.5\%& 66.94\% & 95.75\%   \\
   
  \hline \hline
    \end{tabular}
    
    \end{center}
    \caption{Results of pattern trigger at 90\% transparent}
    \label{table:hello-kitty}
\end{table}

\begin{enumerate}
    
    
    \item \guanxiongVtwo{Even when the prediction of Trojan examples are sensitive towards perturbation, we believe it is different from the benign examples due to the difference in extracted features. To distinguish invisible Trojan examples, the method of generating adversarial perturbation could be utilized. Also, to keep it as a black-box defense, we can focus on methods that only utilize zero-th order gradient information when generating adversarial perturbation \cite{chen2017zoo, liu2020manigen, ughi2020empirical}}
    
    \item In Theorem~\ref{th:delta-diff}, we can see that the optimal way of adding the perturbations is to make them correlated to the distribution of the training data. Although we have demonstrated in this paper that a decent performance can be achieved when we ignore the knowledge about the training data, such knowledge might be available under some practical scenarios. In our future work we will identify these scenarios and decide how to perform the actual correlation between the knowledge of the training data and the exact way to add the perturbation.
\end{enumerate}

\bibliographystyle{IEEEtran}
\bibliography{reference.bib}
\newpage
\onecolumn
\appendix

\begin{table*}[h!]
    \scriptsize
    \parbox{.45\linewidth}{
        \centering
        \begin{tabular}{ c  c | c  c  c  c }
            \hline \hline
            \multirow{2}{*}{Dataset} & \multirow{2}{*}{Trigger} &
            \multirow{2}{*}{Model} &
            \multicolumn{2}{c}{FAR} \\
            & & &\multirow{1}{*}{B\tnote{1}} & \multirow{1}{*}{RGB \tnote{2}} \\ 
            \hline
            \multirow{2}{*}{CIFAR-10}  
            &\multirow{3}{*}{"face"} &  STRIP & 0.0\% &   0.0\% \\
            & & \textbf{\textsc{TrojDef}} & 0.0\%&0.0\% \\
            & & 3rd-party & 0.0\% & 0.0\% \\
            & \multirow{3}{*}{"watermark"} &  STRIP& 0.0\% & 0.0\% \\
             & &  \textbf{\textsc{TrojDef}}& 0.0\% & 0.0\% \\
              &  &  3rd-party& 0.0\% & 0.0\% \\
              & \multirow{2}{*}{"Bottle"} &    \textbf{\textsc{TrojDef}}& 0.15\% & 0.15\% \\
              &  &  3rd-party& 19.5\% & 34.3\% \\
               & \multirow{2}{*}{"Star"} &    \textbf{\textsc{TrojDef}}& 0.0\% & 0.0\% \\
              &  &  3rd-party& 0.0\% & 0.0\% \\
            \hline
             \multirow{2}{*}{GTSRB}
            &\multirow{3}{*}{"face"} &  STRIP & 0.0\% &   0.0\% \\
            & & \textbf{\textsc{TrojDef}} & 0.0\%&0.0\% \\
            & & 3rd-party & 0.0\% & 0.0\% \\
            & \multirow{3}{*}{"watermark"}&  \textbf{\textsc{TrojDef}}& 0.0\% & 0.0\% \\
              &  &  3rd-party& 0.0\% & 100.0\% \\
               & \multirow{2}{*}{"Bottle"} &    \textbf{\textsc{TrojDef}}& 0.05\% & 0.05\% \\
              &  &  3rd-party& 0.05\% & 0.05\% \\
               & \multirow{2}{*}{"Star"} &    \textbf{\textsc{TrojDef}}& 0.0\% & 0.0\% \\
              &  &  3rd-party& 0.0\% & 45.7\% \\
            \hline
             \multirow{2}{*}{CUB200}
            &"face" &  \multirow{2}{*}{3rd-party} & 1.15\% &   100.0\% \\
            &"watermark" &  & 0.0\%&100.0\% \\
            \hline \hline
        \end{tabular}
        \begin{tablenotes}
            \item [1] B:Blue channel.
           \item [2] RGB: All channels.
        \end{tablenotes}
        \caption{Results of applying perturbation on blue channel and all channels experiments}
        \label{table:all-channel}
    }
    \hfill
    \parbox{.45\linewidth}{
        \centering
        \begin{tabular}{ c  c | c  c  c  c }
            \hline \hline
            \multirow{2}{*}{Dataset} & \multirow{2}{*}{Trigger} &
            \multirow{2}{*}{Model} &
            \multicolumn{3}{c}{FAR} \\
            & & &\multirow{1}{*}{B\tnote{1} } & \multirow{1}{*}{R\tnote{2} }
            & \multirow{1}{*}{G\tnote{3} }\\ 
            \hline
            \multirow{2}{*}{CIFAR-10}  
            &\multirow{3}{*}{"face"} &  STRIP & 0.0\% &   0.0\% & 0.0\% \\
            & & \textbf{\textsc{TrojDef}} & 0.0\% & 0.0\% & 0.0\% \\
            & & 3rd-party & 0.0\% & 0.0\% & 0.0\% \\
            & \multirow{3}{*}{"watermark"} &  STRIP& 0.0\% & 0.0\% & 0.0\% \\
             & &  \textbf{\textsc{TrojDef}}& 0.0\% & 0.0\% & 0.0\% \\
              &  &  3rd-party& 0.0\% & 0.0\% & 0.0\% \\
              & \multirow{2}{*}{"Bottle"} &    \textbf{\textsc{TrojDef}}& 0.15\% & 0.15\%  & 0.15\% \\
              &  &  3rd-party& 19.5\% & 100.0\% & 19.90\% \\
              & \multirow{2}{*}{"Star"} &    \textbf{\textsc{TrojDef}}& 0.0\% & 0.0\% & 0.0\% \\
              &  &  3rd-party& 0.0\% & 0.0\% & 0.0\% \\
            \hline
             \multirow{2}{*}{GTSRB}
            &\multirow{3}{*}{"face"} &  STRIP & 0.0\% &   100.0\% &   100.0\% \\
            & & \textbf{\textsc{TrojDef}} & 0.0\%&100.0\% &   100.0\%\\
            & & 3rd-party & 0.0\% & 100.0\% &   100.0\% \\
            & \multirow{3}{*}{"watermark"}&  \textbf{\textsc{TrojDef}}& 0.0\% & 100.0\% & 100.0\% \\
              &  &  3rd-party& 0.0\% & 100.0\% & 100.0\%  \\
              & \multirow{2}{*}{"Bottle"} &    \textbf{\textsc{TrojDef}}& 0.05\% & 100.0\% &   100.0\%\\
              &  &  3rd-party& 0.05\% & 100.0\% &   100.0\%\\
              & \multirow{2}{*}{"Star"} &    \textbf{\textsc{TrojDef}}& 0.0\% & 100.0\% & 0.0\%\\
              &  &  3rd-party& 0.0\% & 100.0\% & 0.0\% \\
            \hline
             \multirow{2}{*}{CUB200}
            &"face" &  \multirow{2}{*}{3rd-party} & 1.15\% &   100.0\% & 100.0\% \\
            &"watermark" &  & 0.0\%&100.0\% & 100.0\% \\
            \hline \hline
        \end{tabular}
        \begin{tablenotes}
          \item [1] B: Blue channel.
          \item [2] R: red channel.
          \item [2] G: green channel.
        \end{tablenotes}
        \caption{Results of applying perturbation on different single-channel  experiments}
        \label{table:blue-vs-other}
    }
        \hfill
    \parbox{.45\linewidth}{
        \centering
        \begin{tabular}{ c  c | c  c   c }
            \hline \hline
            \multirow{2}{*}{Dataset} & \multirow{2}{*}{Trigger} &
            
            \multicolumn{3}{c}{FAR} \\
            & & B\tnote{1}  & R\tnote{2} 
            & G\tnote{3} \\ 
            \hline
             CUB200
            &"blue star"  & 1.30\% &   1.20\% & 9.50\% \\

            \hline \hline
        \end{tabular}
        \begin{tablenotes}
          \item [1] B: Blue channel.
          \item [2] R: red channel.
          \item [2] G: green channel.
        \end{tablenotes}
        \caption{Results of adding trigger  on different channels  with cub200 dataset experiments}
        \label{table:cub200_rgb}
    }
\end{table*}

\begin{table*}
    \scriptsize
    \parbox{.45\linewidth}{
        \centering
        \begin{tabular}{ c  c | c  c  c  c c }
            \hline \hline
            \multirow{3}{*}{Dataset} & \multirow{3}{*}{Trigger} &
            \multirow{3}{*}{Model} &
            \multicolumn{4}{c}{FAR} \\
             & & &\multicolumn{2}{c}{With\tnote{1}}&
            \multicolumn{2}{c}{Without\tnote{2}} \\
             & & & B\tnote{3} & RGB\tnote{4} &B & RGB \\
            
            \hline
            \multirow{2}{*}{CIFAR-10}  
            &\multirow{3}{*}{"face"} &  STRIP & 0.0\% &   0.0\% & 0.0\%  & 100.0\%  \\
            & & \textbf{\textsc{TrojDef}} & 0.0\% & 0.0\% & 100.0\%  & 100.0\%  \\
            & & 3rd-party & 0.0\% & 0.0\% & 100.0\%  & 100.0\%  \\
            & \multirow{3}{*}{"watermark"} &  STRIP& 0.0\% & 0.0\% & 100.0\%  & 100.0\%  \\
             & &  \textbf{\textsc{TrojDef}}& 0.0\% & 0.0\% & 0.0\%   & 100.0\%  \\
              &  &  3rd-party& 0.0\% & 0.0\% & 100.0\%  & 100.0\%  \\
              & \multirow{2}{*}{"Bottle"} &    \textbf{\textsc{TrojDef}}& 0.15\% & 0.15\%  & 100.0\%  & 100.0\%  \\
              &  &  3rd-party& 19.5\% & 34.3\% & 100.0\%  & 100.0\% \\
              & \multirow{2}{*}{"Star"} &    \textbf{\textsc{TrojDef}}& 0.0\% & 0.0\% & 0.0\%  & 100.0\%  \\
              &  &  3rd-party& 0.0\% & 0.0\% & 100.0\%  & 100.0\%  \\
            \hline
             \multirow{2}{*}{GTSRB}
            &\multirow{3}{*}{"face"} &  STRIP & 0.0\% &   100.0\% &   100.0\% & 100.0\%   \\
            & & \textbf{\textsc{TrojDef}} & 0.0\%&100.0\% &   100.0\%  & 100.0\%  \\
            & & 3rd-party & 0.0\% & 100.0\% &   100.0\%  & 100.0\%  \\
            & \multirow{3}{*}{"watermark"}&  \textbf{\textsc{TrojDef}}& 0.0\% & 100.0\% & 100.0\%  & 100.0\% \\
              &  &  3rd-party& 0.0\% & 100.0\% & 100.0\%  & 100.0\%  \\
              & \multirow{2}{*}{"Bottle"} &    \textbf{\textsc{TrojDef}}& 0.05\% & 100.0\% &   100.0\%  & 100.0\% \\
              &  &  3rd-party& 0.05\% & 100.0\% &   100.0\%  & 100.0\%  \\
              & \multirow{2}{*}{"Star"} &    \textbf{\textsc{TrojDef}}& 0.0\% & 100.0\% & 100.0\%  & 100.0\%  \\
              &  &  3rd-part& 0.0\% & 45.7\% & 100.0\%  & 100.0\% \\
            \hline
             \multirow{2}{*}{CUB200}
            &"face" &  \multirow{2}{*}{3rd-party} & 1.15\% &   100.0\% & 100.0\%  & 100.0\%  \\
            &"watermark" &  & 0.0\%&100.0\% & 100.0\%  & 100.0\%  \\
          \hline \hline
        \end{tabular}
        \begin{tablenotes}
            \item [1] With random size and location.
            \item [2] Without random size and location.
            \item [3] B: Blue Channel.
            \item [4] RGB: All channels.
        \end{tablenotes}
        \caption{Results of the experiments with/without random size and location on blue channel and all channels}
        \label{table:rand-size-exp}
    }
    \hfill
    \parbox{.45\linewidth}{
        \centering
        \begin{tabular}{ c  c | c  c  c   }
            \hline \hline
            \multirow{3}{*}{Dataset} & \multirow{3}{*}{Trigger} &
            \multirow{3}{*}{Model} &
            \multicolumn{2}{c}{FAR} \\
             & & & B\tnote{1}/With\tnote{2} &
           B/Without\tnote{3} \\
            
            \hline
            \multirow{2}{*}{CIFAR-10}  
            &\multirow{3}{*}{"face"} &  STRIP & 0.0\% &   0.0\%   \\
            & & \textbf{\textsc{TrojDef}} & 0.0\% & 0.0\%  \\
            & & 3rd-party & 0.0\% & 0.0\% \\
            & \multirow{3}{*}{"watermark"} &  STRIP& 0.0\% & 0.0\%  \\
             & &  \textbf{\textsc{TrojDef}}& 0.0\% & 0.0\%  \\
              &  &  3rd-party& 0.0\% & 0.0\%   \\
              & \multirow{2}{*}{"Bottle"} &    \textbf{\textsc{TrojDef}}& 0.15\% & 0.15\%    \\
              &  &  3rd-party& 19.5\% & 61.75\%  \\
               & \multirow{2}{*}{"Star"} &    \textbf{\textsc{TrojDef}}& 0.0\% & 100.0\%   \\
              &  &  3rd-party& 0.0\% & 0.0\% \\
            \hline
             \multirow{2}{*}{GTSRB}
            &\multirow{3}{*}{"face"} &  STRIP & 0.0\% &   100.0\%  \\
            & & \textbf{\textsc{TrojDef}} & 0.0\%&100.0\% \\
            & & 3rd-party & 0.0\% & 100.0\%  \\
            & \multirow{3}{*}{"watermark"}&  \textbf{\textsc{TrojDef}}& 0.0\% & 100.0\% \\
              &  &  3rd-party& 0.0\% & 100.0\%  \\
               & \multirow{2}{*}{"Bottle"} &    \textbf{\textsc{TrojDef}}& 0.05\% & 100.0\%  \\
              &  &  3rd-party& 0.05\% & 100.0\% \\
               & \multirow{2}{*}{"Star"} &    \textbf{\textsc{TrojDef}}& 0.0\% & 100.0\%   \\
              &  &  3rd-party& 0.0\% & 100.0\% \\
            \hline
             \multirow{2}{*}{CUB200}
            &"face" &  \multirow{2}{*}{3rd-party} & 1.15\% &   100.0\%   \\
            &"watermark" &  & 0.0\%&100.0\%   \\
          \hline \hline
        \end{tabular}
        \begin{tablenotes}
            \item [1] With Dynamic Standard Deviation.
            \item [2] Without Dynamic Standard Deviation.
            \item [3] B: Blue Channel.
        \end{tablenotes}
        \caption{Results of the experiments with/without dynamic standard deviation on blue channel}
        \label{table:dyn-sigma}
    }
\end{table*}

\begin{table*}[h]
    \scriptsize
    \begin{center}
     \begin{threeparttable}[b]
    \begin{tabular}{ c  c | c  c  c  c  }
    \hline \hline
    \multirow{3}{*}{Dataset} & \multirow{3}{*}{Trigger} &
    \multirow{3}{*}{Model} &
    \multicolumn{3}{c}{FAR} \\
     & & & \multirow{2}{*}{All enhancements}&
    \multicolumn{2}{c}{Without enhancements on random noise \tnote{1}} \\
    & & & & B\tnote{2}& RGB \tnote{3} \\

    \hline
    \multirow{2}{*}{CIFAR-10}  
    &\multirow{3}{*}{"face"} &  STRIP & 0.0\% &   100.0\% & 100.0\%   \\
    & & \textbf{\textsc{TrojDef}} & 0.0\% & 100.0\% & 100.0\%   \\
    & & 3rd-party & 0.0\% & 100.0\% & 100.0\%   \\
    & \multirow{3}{*}{"watermark"} &  STRIP& 0.0\% & 100.0\% & 100.0\%   \\
     & &  \textbf{\textsc{TrojDef}}& 0.0\% & 100.0\% & 0.0\%   \\
      &  &  3rd-party& 0.0\% & 100.0\% & 100.0\%    \\
      & \multirow{2}{*}{"Bottle"} &    \textbf{\textsc{TrojDef}}& 0.15\% & 100.0\%  & 100.0\%   \\
      &  &  3rd-party& 19.5\% & 100.0\% & 100.0\%  \\
       & \multirow{2}{*}{"Star"} &    \textbf{\textsc{TrojDef}}& 0.0\% & 100.0\% & 100.0\%   \\
      &  &  3rd-party& 0.0\% & 100.0\% & 100.0\%   \\
    \hline
     \multirow{2}{*}{GTSRB}
    &\multirow{3}{*}{"face"} &  STRIP & 0.0\% &   100.0\% &   100.0\%   \\
    & & \textbf{\textsc{TrojDef}} & 0.0\%&100.0\% &   100.0\%    \\
    & & 3rd-party & 100.0\% & 100.0\% &   100.0\%    \\
    & \multirow{3}{*}{"watermark"}&  \textbf{\textsc{TrojDef}}& 0.0\% & 100.0\% & 100.0\%  \\
      &  &  3rd-party& 0.0\% & 100.0\% & 100.0\%    \\
       & \multirow{2}{*}{"Bottle"} &    \textbf{\textsc{TrojDef}}& 0.05\% & 100.0\% &   100.0\%  \\
      &  &  3rd-party& 0.05\% & 100.0\% &   100.0\%    \\
       & \multirow{2}{*}{"Star"} &    \textbf{\textsc{TrojDef}}& 0.0\% & 100.0\% & 100.0\%   \\
      &  &  3rd-part& 0.0\% & 100.0\% & 100.0\%   \\
    \hline
     \multirow{2}{*}{CUB200}
    &"face" &  \multirow{2}{*}{3rd-party} & 1.15\% &   100.0\% & 100.0\%    \\
    &"watermark" &  & 0.0\%&100.0\% & 100.0\%    \\
  \hline \hline
    \end{tabular}
    \begin{tablenotes}
        \item [1] Randomizing the size \& location of the Gaussian perturbation and utilizing dynamic standard deviation.
       \item [2] B: Blue Channel.
       \item [3] RGB: All channels.
     \end{tablenotes}
     \end{threeparttable}
    \end{center}
    \caption{The results without any enhancement compared with all enhancements}
    \label{table:no-enhance}
\end{table*}

\begin{figure}[htb]
    \centering
   
    \begin{subfigure}[b]{\textwidth}
        \includegraphics[width=\textwidth]{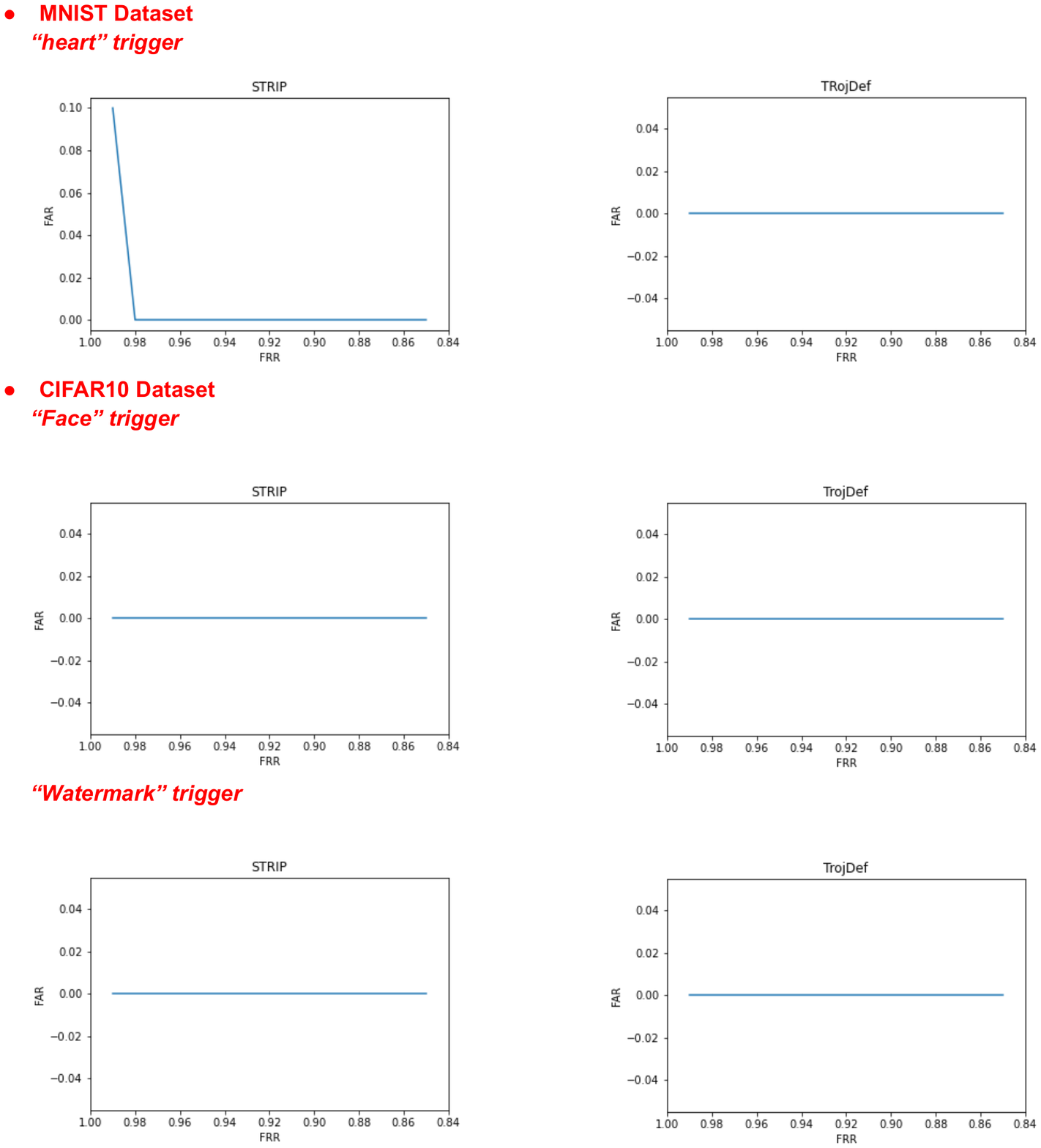}
       
    \end{subfigure}
    \caption{Relationship between FRR and FAR for the experiments with STRIP-model}
    \label{fig:strip_model_ROC}
\end{figure}
\begin{figure}[htb]
    \centering
    \begin{subfigure}[b]{\textwidth}
        \includegraphics[page=1,width=\textwidth]{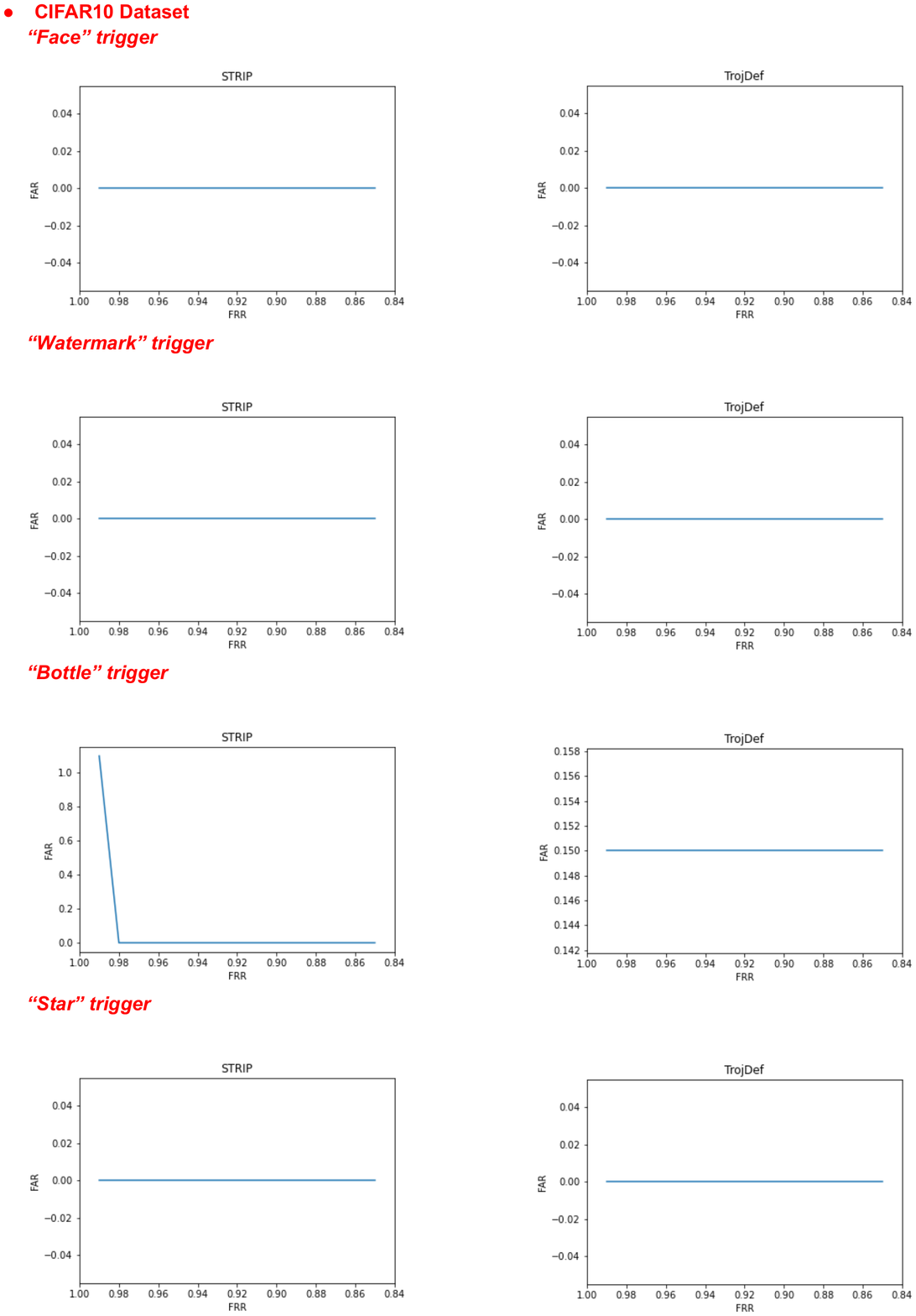}
       
    \end{subfigure}

    \caption{Relationship between FRR and FAR for the experiments with \textbf{\textsc{TrojDef}}-model}
   
\end{figure}
\begin{figure}[htb]\ContinuedFloat
    \centering
   
    \begin{subfigure}[b]{\textwidth}
        \includegraphics[page=2,width=\textwidth]{images/our_model_FAR_FRR.pdf}
       
    \end{subfigure}
    \caption{Relationship between FRR and FAR for the experiments with \textbf{\textsc{TrojDef}}-model (cont.)}
    \label{fig:our_model_ROC}
\end{figure}
\begin{figure}[htb]
    \centering
    \begin{subfigure}[b]{\textwidth}
        \includegraphics[page=1,width=\textwidth]{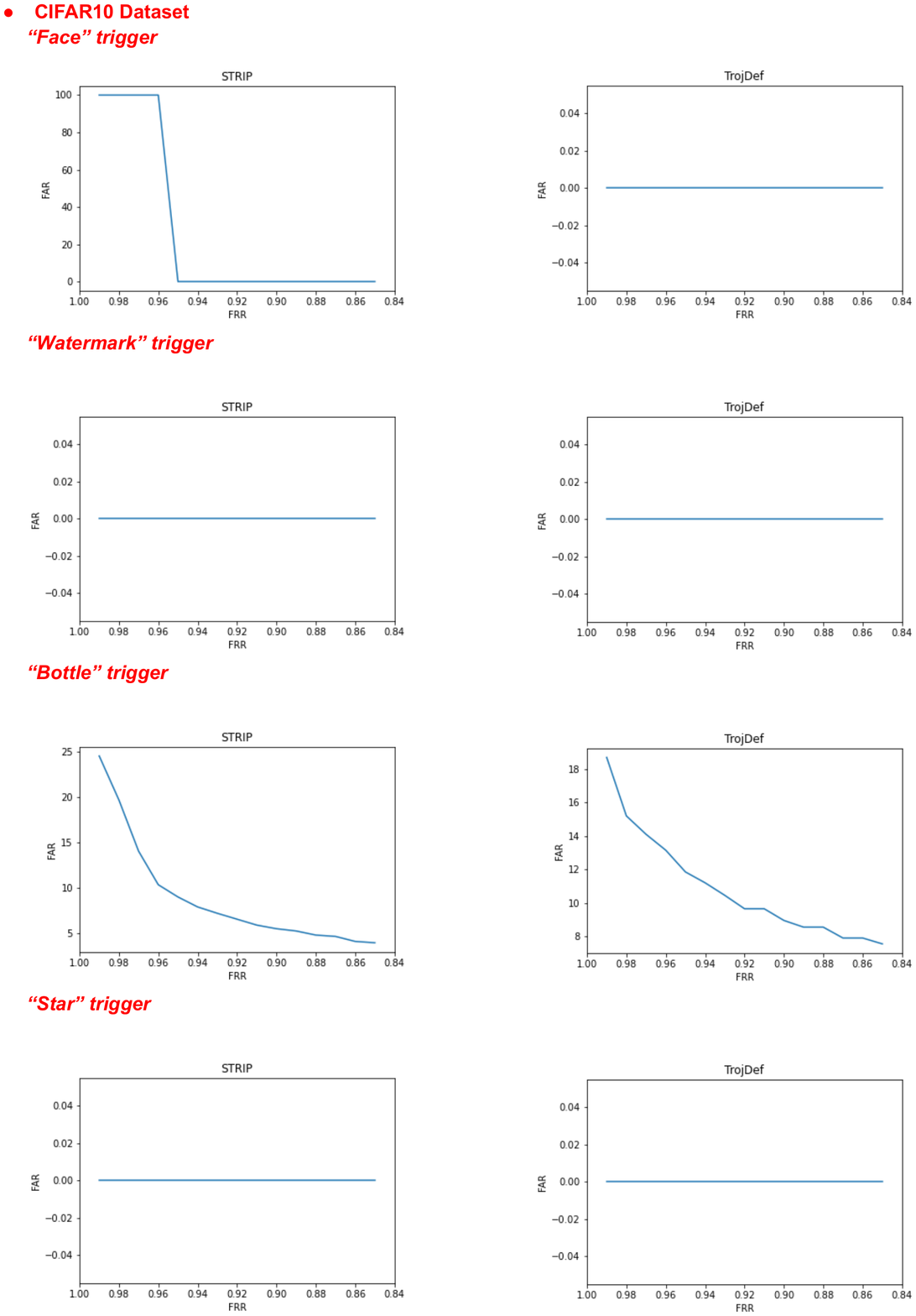}
       
    \end{subfigure}

    \caption{Relationship between FRR and FAR for the experiments with 3rd-party model }
   
\end{figure}
\begin{figure}[htb]\ContinuedFloat
    \centering
    \begin{subfigure}[b]{\textwidth}
        \includegraphics[page=2,width=\textwidth]{images/3erd_model_FAR_FRR.pdf}
       
    \end{subfigure}

    \caption{Relationship between FRR and FAR for the experiments with 3rd-party model (cont.)}
   
\end{figure}

\begin{figure}[htb]\ContinuedFloat
    \centering
   
    \begin{subfigure}[b]{\textwidth}
        \includegraphics[page=3,width=\textwidth]{images/3erd_model_FAR_FRR.pdf}
       
    \end{subfigure}
    \caption{Relationship between FRR and FAR for the experiments with 3rd-party model (cont.)}
    \label{fig:3rd_model_ROC}
\end{figure}

\end{document}